\newcommand{\matrx}[1]{\ensuremath\boldsymbol{\rm #1}}
\newcommand{\vect}[1]{\ensuremath\boldsymbol{\rm #1}}
\newcommand{\pnorm}[1]{\ensuremath{\|#1\|_p}}
\newcommand{\qnorm}[1]{\ensuremath{\|#1\|_q}}
\newcommand{\twonorm}[1]{\ensuremath{\|#1\|_2}}
\newcommand{\fnorm}[1]{\ensuremath{\|#1\|_F}}
\newcommand{\pqnorm}[1]{\ensuremath{\|#1\|_{p,q}}}
\newcommand{\onetwonorm}[1]{\ensuremath{\|#1\|_{1,2}}}
\newcommand{\topidx}[2]{\ensuremath{#1^{(#2)}}}
\newtheorem{theorem}{Theorem}
\newtheorem{remark}{Remark}
\newfont{\mycrnotice}{ptmr8t at 7pt}
\newfont{\myconfname}{ptmri8t at 7pt}
\begin{document}
\title{Longitudinal LASSO: Jointly Learning Features and Temporal Contingency for Outcome Prediction}
\numberofauthors{3}
\author{
\alignauthor
Tingyang Xu\\
       \affaddr{Department of Computer\\
        Science and Engineering}\\
       \affaddr{University of Connecticut}\\
       \affaddr{Storrs, CT, USA}\\
       \email{tix11001@engr.uconn.edu}
\alignauthor
Jiangwen Sun\\
       \affaddr{Department of Computer\\
        Science and Engineering}\\
       \affaddr{University of Connecticut}\\
       \affaddr{Storrs, CT, USA}\\       
       \email{javon@engr.uconn.edu}
\alignauthor
Jinbo Bi \titlenote{Correspondence should be adressed to Jinbo Bi.}\\
       \affaddr{Department of Computer\\
        Science and Engineering}\\
       \affaddr{University of Connecticut}\\
       \affaddr{Storrs, CT, USA}\\
       \email{jinbo@engr.uconn.edu}
}
\date{22 February 2015}

\maketitle

\begin{abstract}
Longitudinal analysis is important in many disciplines, such as the study of behavioral transitions in social science. Only very recently, feature selection has drawn adequate attention in the context of longitudinal modeling. Standard techniques, such as generalized estimating equations, have been modified to select features by imposing sparsity-inducing regularizers. However, they do not explicitly model how a dependent variable relies on features measured at proximal time points. Recent graphical Granger modeling can select features in lagged time points but ignores the temporal correlations within an individual's repeated measurements. We propose an approach to automatically and simultaneously determine both the relevant features and the relevant temporal points that impact the current outcome of the dependent variable. Meanwhile, the proposed model takes into account the non-{\em i.i.d} nature of the data by estimating the within-individual correlations. This approach decomposes model parameters into a summation of two components and imposes separate block-wise LASSO penalties to each component when building a linear model in terms of the past $\tau$ measurements of features. One component is used to select features whereas the other is used to select temporal contingent points. An accelerated gradient descent algorithm is developed to efficiently solve the related optimization problem with detailed convergence analysis and asymptotic analysis. Computational results on both synthetic and real world problems demonstrate the superior performance of the proposed approach over existing techniques.
\end{abstract}

%
%
\category{G.1.6}{Numerical Analysis}{Optimization}[Gradient methods]
\category{H.2.8}{Database management}{Database Application}[Data mining]

\terms{Algorithms, Performance, Experimentation}

\keywords{Longitudinal modeling; regularization methods; sparse predictive modeling; regression}

\section{Introduction}
\label{introduction}
A longitudinal study collects and analyzes repeated measurements of a set of features for a group of subjects through time. Longitudinal analyses are important in many areas, such as in social and behavioral science \cite{Stappenbeck:2010,Fowler:2009,Bi:CSVM:2013}, in economics \cite{REEMtree,Arnold:2007:TCM}, in climate\cite{Lozano2009,Arnold:2007:TCM}, and in genetics \cite{Wang:2012}.  
For example, to predict binge drinking of college students, a longitudinal study may be designed to monitor them weekly or even daily in terms of multiple covariates, such as, the level of stress, status of negative affects and social behaviors \cite{Bi:CSVM:2013,DTCaffects2010Armeli}. The fluctuation of these covariates is used to analyze and predict binge drinking (the dependent or outcome variable) of a student at the current observation time point. Changes of the covariates in the proximal time points are anticipated to alter the likelihood that a student binge drinks at the current observation point. To precisely understand how covariates affect the outcome, the analysis has to model not only the current values of the covariates but also their proximal values as well as take into account the correlation structure in the repeated measurements. 

Typically, longitudinal data are analyzed by extending generalized linear models (GLM) with different assumptions, such as marginal models, random effects models, and transition models \cite{longitudinalAnalysis2002diggle}. For example, a marginal model regresses the outcome on the current observation of features but factors in a within-subject correlation matrix that is estimated for a few proximal time points. In contrast, a random effects model reflects the variability among individuals rather than the population average comparing with marginal models. For marginal modeling, generalized estimating equations (GEE) are the most widely used methods which estimate a predictive model to predict the current outcome together with correlations among different outcomes observed temporally. The resultant predictive models are generally more accurate than those of classic regression analysis that assumes independently and identically distributed (\textit{i.i.d.}) observations \cite{GEE:Liang:1986}. 
Research on feature selection in longitudinal data leads to a new family of methods based on the penalized GEE (PGEE)\cite{Fu:PEE:2003}. For random effects models, generalized linear mixture model(GLMM)\cite{Laird:1982:GLMM,McCulloch:2001:GLMM} is the major method. It explores natural heterogeneity across individuals in the regression coefficients and represents this heterogeneity by a probability distribution. 

None of those extensions of GLM aim to detect causal relationships from temporal changes of covariates to the outcomes of the current effect. In many studies, it is however necessary and insightful to model simultaneously the correlation among outcome records and the lagged causal effects of covariates \cite{DTCaffects2010Armeli}.
For example, psychologists have identified that there is lagged effect in the alcohol use behavior. An individual's drinking today may be a response to an elevated level of stress two days back rather than the current day. It is actually an important question for psychologists to find out both which temporal points and which covariates influence the current outcome the most. This lagged effect is not used by temporal marginal modeling to make predictions.

On the other hand, researchers have developed machine learning approaches for longitudinal analysis that predict an outcome using feature values at multiple time points \cite{Arnold:2007:TCM,Lozano2009}. For example, graphical Granger modeling \cite{Arnold:2007:TCM}, and grouped graphical Granger modeling\cite{Lozano2009} are insightful to explore the influences from past temporal information present in time series data in the modeling and understanding of the causal relationships.
These methods assume that past values of certain time series features causally affect an outcome variable, and hence construct a model based on these values to predict future outcomes. Often, they estimate causality relationship (causal graph) among all features.  However, these methods assume {\em i.i.d.} samples which are clearly violated in longitudinal data, and moreover they are incapable of selecting the most influential time points.

All existing methods either assume {\em i.i.d.} samples in Granger causality modeling or assume correlated samples but do not model {\em temporal} causal effects. Therefore, we propose a new learning formulation that constructs predictive models as functions of covariants not only from the current observation but also from multiple previous consecutive observations, and simultaneously determine the temporal contingency and the most influential features. The proposed method has the following advantages:
\begin{enumerate}
\item The proposed method makes predictions based on lagged data from current and previous time points. It decomposes the model coefficients into a summation of two components and impose different block-wise {\em least absolute shrinkage and selection operators} (LASSO) to the two components. One regularizer is used to detect the contingency of specific time points whereas the other is used to select covariates. 
\item The proposed method also learns simultaneously a structured correlation matrix from the data. The correlations among the outcomes themselves imply the changing trend of the outcomes in the proximal time points within each subject. 
\item We develop a family of methods where the outcome variable is assumed to follow a distribution from the exponential family, including Bernoulli, Gaussian and Poisson distributions. The formulations for these distributions are discussed in Section \ref{sec:Eg_family}.
\item We provide the convergence analysis in Section \ref{sec:convergence} and asymptotic analysis in Section \ref{sec:Asy_Ana} to show that the proposed algorithm can find the optimal solution for the predictive models.
\end{enumerate}
We have empirically compared the proposed method against the state of the art on both synthetic and real world datasets. The computational results demonstrate the effectiveness and the capability of our approach. 

\begin{figure}[h!]
\begin{center} 
\includegraphics[width=.3\textwidth]{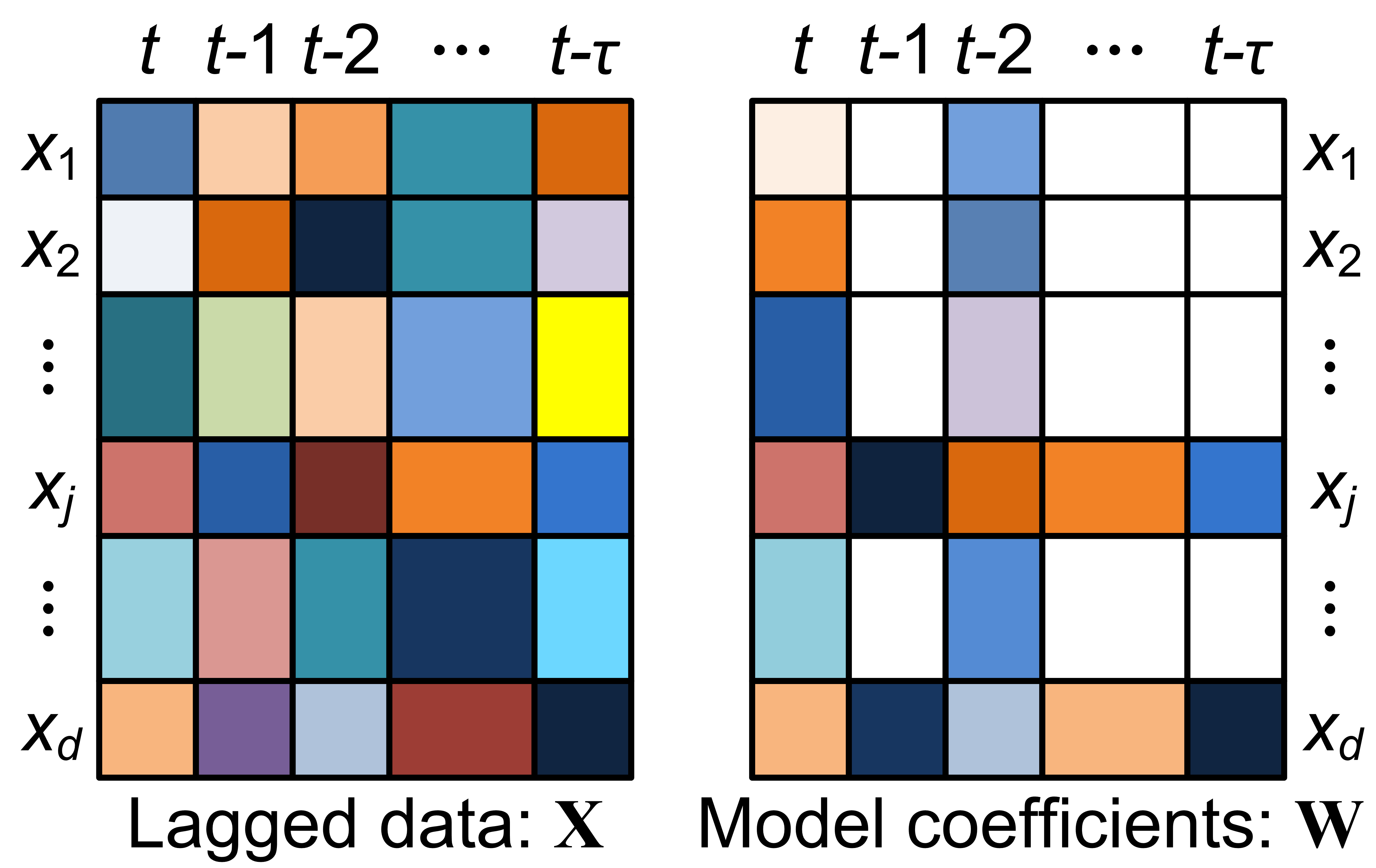}
\caption{The outcome $y_t$ at time $t$ can be relevant to multiple covariates $x_1, x_2, \cdots, x_d$ observed at current and several previous time points $t-1,t-2, \cdots, t-\tau$, which forms a data matrix $\matrx X$ (left). If we associate with each entry of this matrix a weight in our additive prediction model, then our model coefficients form a matrix $\matrx W$ (right). If the coefficient matrix is sparse, then the resultant model will be selective in terms of covariates and time points.} \label{fig:demo}
\end{center} \vspace{-0.3in}
\end{figure}

\section{Method}
\label{sec:formulation}
In our approach, the predictive model takes the form of the {\em trace} of the product of the lagged data $\matrx X$ and the model coefficient matrix $\matrx W$ as shown in Figure \ref{fig:demo}. The model coefficients are organized into a  matrix rather than a vector used in traditional analysis because this way reflects the structure in the lagged data. Note that the lagged observations of $y$ can also be included in the data matrix $\matrx X$ to be used in the predictive model. For notational convenience, we just use $\matrx X$ to represent the data that are used to form the model.

We first briefly review two most relevant sets of longitudinal analytics in Section \ref{sec:background} which will help elucidate the advantages of our proposed formulation.

\subsection{Preliminaries}\label{sec:background}
We introduce the notation that is used through out the paper. A bold lower case letter denotes a vector, such as $\vect v$. The $\pnorm{\vect v}$ refers to the $\ell_p$ norm of a vector $\vect v$, which is formed as $\pnorm{\vect v} = (\sum_{i=1}^d |v_i|^p)^{1/p}$, where $v_i$ is the $i$-th component of $\vect v$ and $d$ is the length of $\vect v$. A bold upper case letter denotes a matrix such as $\matrx{M}$. Similarly, $\vect m_{(i,)}$, $\vect m_{(,j)}$ and $m_{ij}$ represent the $i$-th row, $j$-th column and $(i,j)$-th component of $\matrx M$, respectively. The Frobenius norm and $\ell_{p,q}$ norm of a matrix $\matrx M$ refer, respectively, to $\|\matrx M \|_F$, which is equal to $(tr(\matrx M^\top \matrx M))^{1/2}$, and $\pqnorm{\matrx M}$, defined by $\left(\sum^n_{i=1}\left(\qnorm{\vect m_{(i,)}}\right)^p\right)^{1/p}$, where $n$ is the number of rows in $\matrx{M}$, and $tr(\matrx M)$ indicates the trace of $\matrx M$.  We assume that $\mbox{vect}(\matrx M)$ is the column-major vectorization of $\matrx M$, which is defined as $\mbox{vect}(\matrx M) = (\vect m_{(,1)}^\top, \cdots, \vect m_{(,k)}^\top)^\top$ assuming $k$ columns are in $\matrx M$. Then, $\langle\matrx M_1, \matrx M_2\rangle$ is the inner product of two matrices $\matrx M_1$ and $\matrx M_2$ that is computed as the inner product of $\mbox{vect}(\matrx M_1)$ and $\mbox{vect}(\matrx M_2)$. The operator $\mbox{reshape}(\vect v)$ re-shapes $\vect v$ into a matrix of a proper size determined by the specific context. 

Assume that we are given data of $m$ number of individuals on $d$ number of features (independent variables) that are repeatedly measured at $n_i$ time points for each individual $i$. The data of each individual $i$ is represented by a matrix $\matrx X^{(i)}$ of size ${d\times n_i}$, and $\vect x^{(i)}_t$ refers to the $d$-entry data vector of individual $i$ at time point $t$. Without loss of generality, we assume that all individuals have data at the same consecutive time points ($n_i = n$) to simplify the notation and the subsequent analysis. Data on the dependent variable (outcome) is also given in $\vect y^{(i)}$ of length $n$ that contains the observations at the $n$ time points for individual $i$. Typically, a longitudinal study aims to estimate the effect of covariates on the dependent variable.

\subsubsection{Granger Causality}
The notion of {\em Granger Causality} was introduced by the Nobel prize winning economist, Clive Granger, and has proven useful in time series analysis  \cite{granger1980testing}. It is based on the intuition that if a time series variable causally affects another, the past observations of the former should be useful in predicting the future outcome of the latter.

Specifically, a time series observation $x$ is said to {\em Granger cause} another time series outcome, $y$, if the regressing for $y$ in terms of past $y$ and $x$ is significantly better than the regressing just with past values of $y$. The so-called Granger test first performs two regressions:
\begin{equation} \label{eq:granger}
y^{(i)}_t=\sum_{j=1}^\tau  \left(a_j y^{(i)}_{t-j}+w_j^\top x^{(i)}_{t-j}\right),
\end{equation}
and $y^{(i)}_t=\sum_{j=1}^\tau  a_j y^{(i)}_{t-j}$, where $\tau$ is the maximum ``lag" in the past observations, and then uses a hypothesis test such as an F-test to determine if the outcome $y_t$ can be predicted significantly better from the past covariate $x$. Recent graphical Granger models \cite{Arnold:2007:TCM,Lozano2009} extend it from a single time series covariate $\vect x$ to multiple covariates $\matrx X$. They learn the coefficients $\vect a$ and $\vect w$'s with LASSO type of regularizers and evaluate if coefficients are non-zero for Granger causality.

\subsubsection{Generalized Estimating Equations (GEE)}
GEE estimates the parameters of a GLM while taking into account the correlations in the training examples. Similar to GLM, it assumes that the dependent variable comes from a class of distributions known as the exponential family. For each member in this family, there exists a link function that can be used to translate the nonlinear model into a linear model. The expectation of the outcome $y^{(i)}_t$ for subject $i$ at time $t$ is computed as:
\begin{equation}
\label{equ:exp_moments}
E(\topidx{y}{i}_t) = \mu^{(i)}_t = g^{-1}(\eta^{(i)}_t), 
\end{equation}
where $\mu^{(i)}_t$ represents the mean model, $g^{-1}$ is the inverse of a link function $g$ in a GLM \cite{McCu:Neld:1989:GLM},  and
$\topidx{\eta}{i}_t=\left(\topidx{\vect x}{i}_t\right)^\top \vect w$. The variance of $y^{(i)}_t$ is computed as $\mbox{var}(\topidx{y}{i}_t) =\mbox{var}(\mu^{(i)}_t)/\phi$ where $\phi$ is a scaling parameter that may be known or estimated. 

GEE presumes a so-called working correlation structure, typically denoted by $\matrx R(\vect\alpha)$, where $\vect \alpha$ is a parameter to be determined from data. The common choices of $\matrx R(\vect\alpha)$ include exchangeable, tri-diagonal and the first-order autoregressive (AR(1)) formula \cite{GEE:Liang:1986}. The exchangeable correlation structure, also called {\em equi-correlation}, assumes that  $corr(y_{it},y_{it'})=\alpha$ for all $t\neq t'$. The tri-diagonal structure uses a tridiagonal matrix as $\matrx R(\vect\alpha)$ where $corr(y_{it},y_{it'})= \alpha$ if $t' = t \pm 1$ or $0$ otherwise. 
The AR(1) formula assumes a correlation structure along continuous time, and uses $corr(y_{it},y_{it'})=\alpha^{|t-t'|}$.

To estimate the regression coefficients $\vect w$, GEE uses the the estimating equations that are formulated, in general, by setting the derivative of an appropriate loss function to 0. Although a loss function may not be explicitly written out, the estimating equations always can be computed by 
\begin{equation}
\label{equ:GEE}
EE(\vect w,\vect \alpha) = \sum^m_{i=1}\left(\matrx D^{(i)}\right)^\top\left(\matrx \Sigma^{(i)}\right)^{-1}\vect s^{(i)}=0.
\end{equation}
where the $n \times d$ matrix $\matrx D^{(i)}=\partial\vect\mu^{(i)}/\partial\vect w$ where $\vect\mu^{(i)}$ combines all $\mu_t^{(i)}, \forall t=1,\cdots,n$ into a vector, $\vect s^{(i)} = \vect y^{(i)} - \vect\mu^{(i)}(\vect w)$. The $n \times n$ matrix $\matrx \Sigma^{(i)}$ is the estimated covariance structure as:
\begin{equation}
\label{equ:sigma}
\matrx \Sigma^{(i)}(\vect{\alpha}) = \left(\matrx A^{(i)}\right)^{1/2}\matrx R(\vect \alpha)\left(\matrx A^{(i)}\right)^{1/2}/\phi
\end{equation}
where $\matrx A^{(i)}$ is an $n \times n$ diagonal matrix with $\mbox{var}(\mu^{(i)}_t)$ as the $t$-th diagonal element. Algorithms are given in \cite{GEE:Liang:1986}  to compute $\vect w$ and $\vect \alpha$ for the different choices of $\matrx R(\vect\alpha)$.

\subsection{The Proposed Formulation}
\label{subsec:formulation_LR}
In our approach, each training example consists of the current and $\tau$ previous records of the repeated measurements. Let
\begin{equation*}
\matrx X_{(i;t)} = [\vect x^{(i)}_{t}, \vect x^{(i)}_{t-1}, \cdots, \vect x^{(i)}_{t-\tau}] \end{equation*}
be a ${d\times(\tau + 1)}$ data matrix for subject $i$. Given $T$ total measurements for each subject, the index $t$ of $\matrx X_{(i;t)}$ starts from $\tau+1$ in order to have enough previous observations in the first training example. Hence, there are totally $n=T-\tau$ training examples for each subject. If $\matrx X_{(i;t)}$ includes previous $\tau+1$ values of $y^{(i)}$ as a feature, then the model $y_t^{(i)} = tr\left(\matrx X^\top_{(i;t)}\matrx W\right)$ where $\matrx W = [\vect w_0, \vect w_1, \cdots, \vect w_\tau]$ essentially gives the same model like Eq.(\ref{eq:granger}) in the graphical Granger models.

The Granger models would assume that the training examples are {\em i.i.d.}. However, the consecutive examples are not mutually independent because they contain overlapping records (e.g., $\matrx X_{(i;t)}$ and $\matrx X_{(i;t+1)}$ share $\tau-1$ records $\vect x^{(i)}_t$, $\cdots$, $\vect x^{(i)}_{t-\tau+1}$). GEE provides a mechanism to estimate the sample correlation simultaneously while constructing predictive models, and to extend the linear models to generalized linear models. To apply GEE to our model, we replace $\eta_t^{(i)}$ used in GEE by the following formula
\begin{equation} \label{equ:AVR}
\eta^{(i)}_t=tr\left(\matrx X^\top_{(i;t)}\matrx W\right).
\end{equation}
Substituting Eq.(\ref{equ:AVR}) for $\eta$ in Eq.(\ref{equ:exp_moments}) yields a formulation similar to GEE. The regression coefficients $\matrx W$ can be estimated through the well-developed GEE estimators. In particular, the quasi-likelihood methods of GEE estimate $\matrx W$ by minimizing a loss function that is defined via the model deviance. The model deviance measures the difference between the log-likelihood of the estimated mean model $\vect \mu^{(i)}$ and that of the observed values $\vect y^{(i)}$. 
For instance, the model deviance for a linearly regressive response is written by $Dev^{(i)}(\matrx W,\vect \alpha)=(\vect y^{(i)}-\vect\mu^{(i)})^\top\matrx R(\vect \alpha)(\vect y^{(i)}-\vect\mu^{(i)})$ where $\vect y^{(i)}$ contains the observed responses for subject $i$, and $\vect\mu^{(i)}$ is the estimated expectations of $y$ for subject $i$. If the response follows an arbitrary distribution, the model deviance may not correspond to an explicit function. For the exponential family, it takes a special form as discussed in Theorem 1 below, which is still complicated. We denote by $Dev^{(i)}(\matrx W,\vect \alpha)$ the deviance occurred on subject $i$. GEE minimizes a loss function of $\sum_{i=1}^m Dev^{(i)}(\matrx W,\vect \alpha)$ for the optimal $\matrx W$ by solving the {\em estimating equations}, i.e., taking the derivatives of the loss function and setting them to $0$.

Now, to select among features and discover the most influential time points in predicting $y$ over time, (and also to control the model capacity,) we apply regularizers to the model parameters. We first decompose $\matrx W$ into a summation of two components as $\matrx W = \matrx U + \matrx V$ and apply different regularizers to $\matrx U$ and $\matrx V$. The block-wise LASSO, such as the $\ell_{1,2}$ matrix norm, is widely-used in multi-task learning or feature selection with group structures, but has not been explored within the GEE setting. To the best of our knowledge, it has not been studied in longitudinal analytics how to produce shrinkage effects simultaneously on both features and contingent temporal records through proper regularization. The general $\ell_{1,p}$ matrix norm  \cite{zhang2010probabilistic} calculates the sum of the $\ell_p$ norms of the rows in a matrix. Regularizers based on the $\ell_{1,p}$ norms encourage row sparsity by shrinking the entire rows to have zero entries. 

In our parameter matrix $\matrx W$, rows correspond to features and columns correspond to the observation time points. If we apply the $\ell_{1,2}$ norm to $\matrx U$ (row-wisely), the optimal solution of ${\matrx U}$ will contain rows with all zero entries. Thus, a selected subset of features in the $\tau+1$ observations will be used in the predictive model to predict the current outcome. The $\ell_{1,2}$ norm of $\matrx V^\top$ (column-wisely) encourages to select among columns of $\matrx V$. If the $k$-th column of $\matrx V$ contains the largest values in the selected columns,  the current outcome is most contingent on the previous $(k-1)$-th record, thus having the $(k-1)$ ``lagged" effect. 
Overall, we solve the following optimization problem for the best model parameters $\matrx W$ which is computed as $\matrx U + \matrx V$:
\begin{eqnarray}
\label{equ:dev_penality}
\min_{\matrx U, \matrx V}~~~\sum^m_{i=1} Dev^{(i)}(\matrx U + \matrx V, \vect \alpha)+ \lambda_1\onetwonorm{\matrx U} + \lambda_2 \onetwonorm{\matrx V^\top}
\end{eqnarray} 
where $\matrx W$ in the deviance is simply replaced by $\matrx U + \matrx V$. 

The optimization of Eq.(\ref{equ:dev_penality}) is challenging. In general, even solving the GEE formulation is not easy as it estimates not only the model expectation but also the variance term $\matrx \Sigma^{(i)}$. The algorithm that solves the GEE (i.e., the estimating equations) applies the Newton-Raphson method in the iterative reweighted least squares (IRLS) procedure \cite{Fu:PEE:2003} to estimate $\vect  w$ and $\matrx \Sigma^{(i)}$. However, this method does not solve any formula that uses regularizers. By modifying the Newton-Raphson method or shooting algorithm \cite{Fu:PEE:2003}, it can be extended only to the regularizers that are decomposable into individual parameters $w_j$. For instance, the $\ell_1$ vector norm of $\vect w$ can be decomposed into the summation of individual $|w_j|$, $j=1,\cdots, d$. The $\ell_{1,2}$ matrix norm, unfortunately, can not be decomposed in such a way. Therefore, we have developed an accelerated gradient descent method based on the fast iterative shrinkage-thresholding algorithm (FISTA) \cite{Beck:2009:FISTA}. Further, the following theorem shows that Eq.(\ref{equ:dev_penality}) is a convex optimization problem in terms of $\matrx W$. Our algorithm can be proved to find the global optimal solution $\matrx W$ of Eq.(\ref{equ:dev_penality}) when $\vect \alpha$ is fixed (to a consistent estimate given by GEE).

\begin{theorem}
\label{thm:convex}
The first term of Eq.(\ref{equ:dev_penality}) is convex and continuously differentiable with respect to $\matrx U$ and $\matrx V$ if the distribution of $\vect y^{(i)}$ is in a natural exponential family and the link function is continuous.
\end{theorem}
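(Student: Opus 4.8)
The plan is to establish the two properties first for each summand $Dev^{(i)}$ viewed as a function of the single matrix $\matrx W$, and then transfer them to the pair $(\matrx U,\matrx V)$ through the linear substitution $\matrx W=\matrx U+\matrx V$. The starting observation is that for every subject $i$ and time index $t$ the linear predictor $\eta^{(i)}_t=tr(\matrx X^\top_{(i;t)}\matrx W)=\langle\matrx X_{(i;t)},\matrx W\rangle$ is a linear functional of $\matrx W$, so the map $\matrx W\mapsto\vect\eta^{(i)}=(\eta^{(i)}_{\tau+1},\dots,\eta^{(i)}_{T})^\top$ is linear and $\vect\mu^{(i)}=g^{-1}(\vect\eta^{(i)})$ is obtained by composing this linear map with the componentwise application of $g^{-1}$. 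For a natural exponential family written as $\exp\{(y\theta-b(\theta))/a(\phi)+c(y,\phi)\}$ with canonical link $g=(b')^{-1}$, the canonical parameter coincides with the predictor, $\theta^{(i)}_t=\eta^{(i)}_t$, and $g^{-1}=b'$ is smooth on the interior of the natural parameter space; thus it suffices to study $Dev^{(i)}$ as a function of the canonical parameters, which are linear in $\matrx W$.

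For continuous differentiability, I would write the subject-level deviance explicitly from the exponential-family density. Each per-observation contribution has the form $\mbox{dev}(y^{(i)}_t,\mu^{(i)}_t)=2\bigl[y^{(i)}_t\psi(y^{(i)}_t)-b(\psi(y^{(i)}_t))\bigr]-2\bigl[y^{(i)}_t\theta^{(i)}_t-b(\theta^{(i)}_t)\bigr]$ with $\psi=(b')^{-1}$, and $Dev^{(i)}(\matrx W,\vect\alpha)$ assembles these residuals through the working correlation matrix $\matrx R(\vect\alpha)$; for the Gaussian/identity-link instance this is exactly $(\vect y^{(i)}-\vect\mu^{(i)})^\top\matrx R(\vect\alpha)(\vect y^{(i)}-\vect\mu^{(i)})$. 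Since $b$ and $\psi$ are smooth on the interior of the parameter space, $\vect\theta^{(i)}$ is linear in $\matrx W$, and $g$ is a continuous, invertible link (ensuring $\vect\mu^{(i)}$ stays in the interior of the mean space where $b'$ and its inverse are smooth), the chain rule shows $Dev^{(i)}$ is continuously differentiable in $\matrx W$, and hence so is the finite sum $\sum_i Dev^{(i)}(\cdot,\vect\alpha)$.

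For convexity, note that the saturated-model bracket $y^{(i)}_t\psi(y^{(i)}_t)-b(\psi(y^{(i)}_t))$ is independent of $\matrx W$, so each per-observation term reduces, modulo an additive constant, to $2\bigl(b(\theta^{(i)}_t)-y^{(i)}_t\theta^{(i)}_t\bigr)$, which is convex in $\theta^{(i)}_t$ because $b''=\mbox{var}(\mu)\ge0$ in any exponential family. Summing these convex functions over $t$ and $i$, weighting the Gaussian instance by the positive semidefinite correlation matrix $\matrx R(\vect\alpha)$, and precomposing with the linear map $\matrx W\mapsto\vect\theta^{(i)}$ all preserve convexity, so $\sum_i Dev^{(i)}(\cdot,\vect\alpha)$ is convex in $\matrx W$. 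Finally, since $(\matrx U,\matrx V)\mapsto\matrx U+\matrx V$ is linear, a function that is convex and continuously differentiable in $\matrx W=\matrx U+\matrx V$ is jointly convex and continuously differentiable in $(\matrx U,\matrx V)$, which is the claim.

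The main obstacle is the general exponential-family case of the previous two paragraphs rather than the Gaussian one: one must pin down the ``special form'' of $Dev^{(i)}$ for an arbitrary member of the family and verify that its Hessian with respect to $\matrx W$ stays positive semidefinite, i.e., that coupling the nonlinear mean--variance relationship with the working correlation $\matrx R(\vect\alpha)$ does not destroy convexity. Committing to the canonical (natural) link is what makes this manageable, since it collapses the mean model to $\theta=\eta$ and reduces the entire question to the convexity of the single scalar cumulant function $b$.
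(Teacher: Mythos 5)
Your argument is essentially the paper's: both proofs reduce the deviance, after discarding the saturated-model terms as constants, to an expression of the form $b(\hat\eta^{(i)}_t)-y^{(i)}_t\hat\eta^{(i)}_t$, invoke convexity of the cumulant $b$ on the natural parameter space (you via $b''=\mbox{var}(\mu)\ge 0$, the paper by citation), use linearity of $\hat\eta^{(i)}_t=tr(\matrx X^\top_{(i;t)}(\matrx U+\matrx V))$ in $(\matrx U,\matrx V)$ to transfer convexity, and obtain continuous differentiability from smoothness of $b$ together with continuity of $g^{-1}$. The ``obstacle'' you flag about the working correlation $\matrx R(\vect\alpha)$ coupling the per-observation terms is one the paper silently sidesteps by writing the general exponential-family deviance in its uncorrelated unit form, so your treatment is, if anything, slightly more explicit on that point.
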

 
\begin{proof}
First, let us recall that the probability density function of a distribution in the exponential family takes the following form:
\begin{equation*}
\label{equ:expfmly}
f(y^{(i)}_t)=\exp\left\{\frac{y^{(i)}_t\eta^{(i)}_t-b(\eta^{(i)}_t)}{a^{(i)}_t(\phi)}+c(y^{(i)}_t,\phi)\right\},
\end{equation*}
where $a^{(i)}_t(\phi)$, $b(\eta^{(i)}_t)$, and $c(y^{(i)}_t,\phi)$ are known functions and specified for each member of the exponential family, and $\eta^{(i)}_t$ is a parameter in the mean as defined in Eq.(\ref{equ:exp_moments}). Typically, $a^{(i)}_t(\phi)=\phi$. Then, the deviance of the exponential family can be computed as
\begin{equation*}
\label{equ:expfamily}
Dev=2\frac{\sum_{i=1}^{m}\left(y^{(i)}_t(\tilde{\eta}^{(i)}_t-\hat{\eta}^{(i)}_t)-b(\tilde{\eta}^{(i)}_t)+b(\hat{\eta}^{(i)}_t)\right)}{\phi},
\end{equation*}
where $\tilde{\eta}^{(i)}_t$ denotes the true value under a saturated model, $\hat{\eta}^{(i)}_t$ denotes the fitted values of the model. Thus, $\tilde{\eta}^{(i)}_t$ and $b(\tilde{\eta}^{(i)}_t)$ are constant in model fitting. The derivative of $b$ always satisfies $b'({\eta}^{(i)}_t)={\mu}^{(i)}_t$. Moreover, it has been proved that $b(\hat{\eta}^{(i)}_t)$ is a convex function on the natural parameter space $\matrx H = \{\vect{\hat{\eta}}|b(\vect{\hat{\eta}})<\infty\}$ \cite{expFamily2005Severini}. Thus, the deviance contains either linear terms or a convex term with respect to $\hat{\eta}$. In our model (\ref{equ:AVR}), $\hat{\eta}$ is linear with respect to $\matrx W$. Hence, the deviance term in Eq.(\ref{equ:dev_penality}) is convex with respect to $\matrx U$ and $\matrx V$.
 
Moreover, it is true that $b'(\hat{\eta}^{(i)}_t)=\hat{\mu}^{(i)}_t=g^{-1}(\hat{\eta}^{(i)}_t)$ which is the inverse of a continuous link function \cite{expFamily2005Severini}.  The first term of Eq.(\ref{equ:dev_penality}) is continuously differentiable with respect to $\matrx U$ and $\matrx V$. Thus, theorem \ref{thm:convex} holds.
\end{proof}

\subsection{Optimization Algorithm}
\label{subsec:optimization}
To solve Eq.(\ref{equ:dev_penality}), we design an alternating optimization algorithm that alternates between optimizing two working sets of variables: one set consisting of $\matrx U$ and $\matrx V$ and the other consisting of $\vect \alpha$. 

\smallskip
\noindent\textit{\textbf{(a) Find $\matrx U$ and $\matrx V$ when $\vect \alpha$ is fixed}}

When $\vect \alpha$ is fixed, the objective function of Eq.(\ref{equ:dev_penality}), denoted by $f(\matrx U,\matrx V)$, is convex with a continuously differentiable part $\ell(\matrx U, \matrx V)$ that is the deviance and a nonsmooth part $R(\matrx U,\matrx V)$ that constitutes the two regularizers. 
We hence have
\begin{equation*}
f(\matrx U, \matrx V) = \ell(\matrx U, \matrx V) + R(\matrx U, \matrx V).
\end{equation*}
We develop a FISTA algorithm in the following iterative procedure to find optimal $\matrx U$ and $\matrx V$.

Denote the iterates at the $k$-th iteration by $\matrx U_k$ and $\matrx V_k$. Let $\nabla_{\matrx U}\ell(\matrx U, \matrx V)$, $\nabla_{\matrx V}\ell(\matrx U, \matrx V)$ be the partial derivative of $\ell(\matrx U, \matrx V)$ with respect to $\matrx U$ and $\matrx V$, respectively, 
For any given point $(\tilde{\matrx U}, \tilde{\matrx V})$, the following $Q_{L, \tilde{\matrx U}, \tilde{\matrx V}}(\matrx U, \matrx V)$ is a {\em well-defined} proximal map for the non-smooth $R$
\begin{equation*}
\begin{split}
Q_{L, \tilde{\matrx U}, \tilde{\matrx V}}(\matrx U, \matrx V) & = \ell(\tilde{\matrx U}, \tilde{\matrx V}) + R(\matrx U, \matrx V) \\
&+ \langle\nabla_{\matrx U}\ell(\tilde{\matrx U}, \tilde{\matrx V}),\matrx U - \tilde{\matrx U}\rangle + \frac{L}{2}\fnorm{\matrx U - \tilde{\matrx U}}^2 \\
&+ \langle\nabla_{\matrx V}\ell(\tilde{\matrx U}, \tilde{\matrx V}),\matrx V - \tilde{\matrx V}\rangle + \frac{L}{2}\fnorm{\matrx V - \tilde{\matrx V}}^2.
\end{split}
\end{equation*}
If $\ell(\matrx U, \matrx V)$ has Lipschitz continuous gradient with Lipschitz modulus $L$. Then, according to the Lemma 2.1 in \cite{Beck:2009:FISTA}, the inequality 
\begin{equation*}
\label{equ:Lipschitz}
f(\matrx U, \matrx V) \le Q_{L,\tilde{\matrx U}, \tilde{\matrx V}}(\matrx U, \matrx V).
\end{equation*}
holds indicating that $Q_{L,\tilde{\matrx U}, \tilde{\matrx V}}(\matrx U, \matrx V)$ is the upper
bound of $f(\matrx U, \matrx V)$.

Starting from an initial point $(\matrx U_0, \matrx V_0)$, we iteratively search for the optimal solution. At each iteration $k$, we first use the iterates $(\matrx U_{k-1},\matrx V_{k-1})$ and $(\matrx U_{k-2},\matrx V_{k-2})$  to compute (at the first iteration, $(\tilde{\matrx U}_{1},\tilde{\matrx V}_{1})=(\matrx U_{0},\matrx V_{0})$)
\begin{equation}\label{equ:interm_UV}
\begin{split}
\tilde{\matrx U}_{k} = \matrx U_{k-1} + \left(\frac{t_{k-1} -1}{t_k}\right)(\matrx U_{k-1} - \matrx U_{k-2}),\\
\tilde{\matrx V}_{k} = \matrx V_{k-1} + \left(\frac{t_{k-1} -1}{t_k}\right)(\matrx V_{k-1} - \matrx V_{k-2}),
\end{split}
\end{equation}  
where $t_k$ is a scalar and updated at each iteration as:
\begin{equation}
\label{equ:update_t}
t_{k+1} = \frac{1+\sqrt{1+4t_k^2}}{2}.
\end{equation}
Then, we solve the following problem
\begin{equation}
\label{equ:update_UV}
\begin{split}
\min_{\matrx U, \matrx V} ~~~~&\langle\nabla_{\matrx U}\ell_k,\matrx U - \tilde{\matrx U}_{k}\rangle + \frac{L}{2}\fnorm{\matrx U - \tilde{\matrx U}_{k}}^2 \\
&+ \langle\nabla_{\matrx V}\ell_k,\matrx V - \tilde{\matrx V}_{k}\rangle + \frac{L}{2}\fnorm{\matrx V - \tilde{\matrx V}_{k}}^2 \\
&+ R(\matrx U, \matrx V)
\end{split}
\end{equation} 
for a solution $(\matrx U_k, \matrx V_k)$, where $\nabla_{\matrx U}\ell_k$ and $\nabla_{\matrx V}\ell_k$ are respectively the partial derivatives of $\ell$ computed at $(\tilde{\matrx U}_{k}, \tilde{\matrx V}_{k})$, and $L$ acts as a learning step size.

Since there is no interacting term between $\matrx U$ and $\matrx{V}$ in Eq.(\ref{equ:update_UV}), the problem can be decomposed into two separate subproblems as follows:
\begin{equation}
\label{equ:solve_U}
\min_{\matrx U} \langle\nabla_{\matrx U}\ell_k,\matrx U - \tilde{\matrx U}_{k}\rangle + \frac{L}{2}\fnorm{\matrx U - \tilde{\matrx U}_{k}}^2 + \lambda_{1} \onetwonorm{\matrx U}, 
\end{equation} 
\begin{equation}
\label{equ_solve_V}
\min_{\matrx V} \langle\nabla_{\matrx V}\ell_k,\matrx V - \tilde{\matrx V}_{k}\rangle + \frac{L}{2}\fnorm{\matrx V - \tilde{\matrx V}_{k}}^2 + \lambda_{2} \onetwonorm{\matrx V^\top}.
\end{equation} 
The two subproblems share the same structure and thus can be solved following the same procedure. Hence, we only show how to solve (\ref{equ:solve_U}) for the best $\matrx U$.

Eq.(\ref{equ:solve_U}) is equivalent to the following problem
\begin{equation*}
\min_{\matrx U} \frac{1}{2}\left\lVert\matrx U - \left(\tilde{\matrx U}_{k} - \frac{1}{L}\nabla_{\matrx U}\ell_k\right)\right\rVert_F^2 + \frac{\lambda_{1}}{L} \onetwonorm{\matrx U}
\end{equation*}
after omitting constants, and this problem has a closed-form solution where each row of $\matrx U_k$, $\matrx U^k_{(i,)}$ is:
\begin{equation*}
\matrx U^{k}_{(i,)} = \max\left(0, 1-\frac{\lambda_1}{L\twonorm{\topidx{\matrx P}{k}_{(i,)}}}\right)\topidx{\matrx P}{k}_{(i,)},
\end{equation*}
and $
\topidx{\matrx P}{k} = \tilde{\matrx U}_{k} - \frac{1}{L}\nabla_{\matrx U}\ell_k$.
The gradient vector $\nabla_{\matrx U}\ell_k$ (i.e., the gradient of the deviance) can be computed by Eq.(\ref{equ:GEE}) with the fixed $\vect \alpha$, i.e.
\begin{equation}
\label{equ:gradient}
\nabla_{\matrx U}\ell_k =  \mbox{reshape}\left(\sum^m_{i=1}\left(\matrx D^{(i)}\right)^\top\left(\matrx \Sigma^{(i)}\right)^{-1}\vect s^{(i)}_k\right)
\end{equation}
where $\vect s^{(i)}_k=\vect y^{(i)}-\vect \mu^{(i)}$, and $\mu^{(i)}_t = g^{-1}(tr(\matrx X^\top_{(i;t)}(\tilde{\matrx U}_k + \tilde{\matrx V}_k)))$.

In the above discussion, the Lipschitz modulus $L$ is computed and given. However, the calculation of $L$ can be computational expensive. We therefore follow the similar argument in \cite{MTL:2012:Gong} to find a proper approximation $L_k$ at each iteration $k$ starting from $L_0 > 0$. Recall that the Lipschitz constant $L$ is defined:
\begin{equation*}
L=\max_{\matrx W}\lambda_{\max}\left(\nabla\nabla \ell_{\matrx W}\right)
\end{equation*}
where $\lambda_{\max}(\cdot)$ indicates the maximum singular value of the Hessian of $\ell$. Decompose the Hessian matrix $\left.\nabla\nabla \ell_{\matrx W}\right|_{\matrx W \rightarrow 0}$ into $\matrx M ^\top \matrx M$ where $\matrx M\in \mathbb{R}^{d(\tau+1) \times q}$ and $q$ is the rank of the Hessian matrix. We have an upper bound of $L$ as follows:
\begin{equation}\label{eq:upperL}
L\le||\matrx M||_{\infty,1}||\matrx M^\top||_{\infty,1}.
\end{equation}

We use the upper bound $\tilde{L}$ in Eq.(\ref{eq:upperL}) as $L$ in our iterations. Using this upper bound may increase the number of iterative steps for convergence.
Algorithm \ref{alg:solve_uv} summarizes the steps for finding optimal $\matrx U$ and $\matrx V$ with fixed $\vect \alpha$.
\begin{algorithm}[h]
   \caption{ Search for optimal $\matrx U$ and $\matrx V$ with fixed $\vect \alpha$}
   \label{alg:solve_uv}
\begin{algorithmic}
   \State {\bfseries Input:} $\matrx X$, $\vect y$, $\matrx \Sigma$, $\lambda_1$, $\lambda_2$
   \State {\bfseries Output:} $\matrx U$, $\matrx V$
   \State 1. $k$ = 1, compute $\tilde{L}$ and initialize $t_1=1$, $\matrx U_0= \tilde{\matrx U}_1 = \vect 0$ and $\matrx V_0= \tilde{\matrx V}_1 = \vect 0$;
   \State 2. Solve Eq.(\ref{equ:update_UV}) to obtain $\matrx U_{k}$ and $\matrx V_{k}$.  
   \State 3. Compute $t_{k+1}$ by Eq.(\ref{equ:update_t}).
   \State 4. Compute $\tilde{\matrx U}_{k+1}$ and $\tilde{\matrx V}_{k+1}$ by Eq.(\ref{equ:interm_UV}).
   \State 5. $k=k+1$.
   \State Repeat $2\sim 5$ until convergence.
\end{algorithmic}
\end{algorithm}

\smallskip
\noindent\textit{\textbf{(b) Find $\vect \alpha$ when $\matrx U$ and $\matrx V$}} are fixed

When $\matrx U$ and $\matrx V$ are fixed, the regularizers no longer appear in the objective of Eq.(\ref{equ:dev_penality}). Eq.(\ref{equ:dev_penality}) is degenerated into just the GEE formula with $\vect \alpha$ as the variables. Hence, ${\vect\alpha}$ can be estimated via the standard GEE procedure, i.e., from the current Pearson residuals defined by:
\begin{equation*}
\topidx{\gamma}{i}_t = \frac{\topidx{y}{i}_t-tr\left(\left(\matrx X_{(i;t)}\right)^\top (\matrx U + \matrx V)\right)}{(\topidx{\sigma}{i}_{t,t})^{(1/2)}}.
\end{equation*}
where $\sigma^{(i)}_{t,t}$ is the $t$-th diagonal entry in the matrix $\matrx \Sigma^{(i)}$ \cite{GEE:Liang:1986}. The specific estimator of ${\vect\alpha}$ depends on the choices of $\matrx R(\vect\alpha)$. This GEE-based procedure has been shown to find a {\em consistent} estimate of $\vect \alpha$ \cite{GEE:Liang:1986}.

Let $N=mn$ be the total number of training examples, and $p=d(\tau + 1)$ be the practical number of parameters in $\matrx W$. A general approach to estimating $\matrx R$ is given by:
\begin{equation} \label{eq:compute_R}
r_{j,k} = \sum_{i=1}^m \frac{\topidx{\gamma}{i}_j \topidx{\gamma}{i}_k}{N-p},
\end{equation}
for $j=1,\cdots,n$, and $k=1,\cdots, n$. In addition, the scaler parameter $\phi$ in Eq.(\ref{equ:sigma}) can be estimated as follows:
\begin{equation} \label{eq:phi}
{\phi}=(N-p)/\sum_{i=1}^m \sum_{t=1}^n \left(\topidx{\gamma}{i}_t\right)^2.
\end{equation}

Algorithm \ref{alg:overall} depicts the overall procedure for solving Eq.(\ref{equ:dev_penality}).
\begin{algorithm}[h] 
   \caption{ Main algorithm - Jointly select features and temporal points}
   \label{alg:overall}
\begin{algorithmic}
   \State {\bfseries Input:} $\matrx X$, $\vect y$, $\lambda_1$, $\lambda_2$
   \State {\bfseries Output:} $\matrx U$, $\matrx V$
   \State 1. Set $\matrx R(\alpha)$ = $\matrx I$;
   \State 2. Solve for $\matrx U$ and $\matrx V$ using Algorithm \ref{alg:solve_uv}.
   \State 3. Estimate $\alpha$ using a proper estimator in \cite{GEE:Liang:1986} and compute $\matrx R(\alpha)$ by Eq.(\ref{eq:compute_R}) and $\phi$ by Eq.(\ref{eq:phi}).
   \State Repeat $2\sim 3$ until convergence.
\end{algorithmic}
\end{algorithm}

\section{Theoretical Analysis}\label{sec:Con_Ana}
We provide a convergence analysis for Algorithm 1 and an asymptotic analysis for the proposed formulation. 

\subsection{Convergence Analysis} \label{sec:convergence}
We show that Algorithm 1 converges to the optimal solution with a convergence rate of $O(1/k^2)$. The proof follows largely the arguments in \cite{Beck:2009:FISTA}. We only provide a sketch here.

\begin{theorem}
Let {$\matrx U_k$} and {$\matrx V_k$} be the pair of the matrix generated by Algorithm \ref{alg:solve_uv}. Then for any $k\ge 1$
\begin{equation*}
f(\matrx U_k,\matrx V_k)-f(\hat{\matrx U},\hat{\matrx V})\le \frac{2\tilde{L}\left(||\matrx U_0-\hat{\matrx U}||^2_F+||\matrx V_0-\hat{\matrx V}||^2_F\right)}{(k+1)^2}
\end{equation*}
where $(\hat{\matrx U}, \hat{\matrx V})$ is a globally optimal solution of Eq.(\ref{equ:dev_penality}).
\end{theorem}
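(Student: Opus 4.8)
The plan is to transcribe the convergence proof of FISTA from \cite{Beck:2009:FISTA} into the product space of the two blocks. First I would set $\matrx Z=(\matrx U,\matrx V)$ and equip the space of such pairs with the inner product $\langle(\matrx U_1,\matrx V_1),(\matrx U_2,\matrx V_2)\rangle=\langle\matrx U_1,\matrx U_2\rangle+\langle\matrx V_1,\matrx V_2\rangle$ and the induced norm $\fnorm{(\matrx U,\matrx V)}^2=\fnorm{\matrx U}^2+\fnorm{\matrx V}^2$. In these variables $f=\ell+R$ is convex (Theorem~\ref{thm:convex}), $\ell$ has a Lipschitz-continuous gradient with modulus at most $\tilde L$ by Eq.(\ref{eq:upperL}), and $R(\matrx U,\matrx V)=\lambda_1\onetwonorm{\matrx U}+\lambda_2\onetwonorm{\matrx V^\top}$ is a closed proper convex (separable) function whose proximal step --- the exact minimizer of Eq.(\ref{equ:update_UV}) --- is precisely the pair $(\matrx U_k,\matrx V_k)$ returned by Algorithm~\ref{alg:solve_uv}, since the closed forms in Section~\ref{subsec:optimization} solve the two decoupled subproblems exactly. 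Taking $L=\tilde L$, Lemma~2.1 of \cite{Beck:2009:FISTA} yields the quadratic majorization $f(\matrx Z)\le Q_{\tilde L,\tilde{\matrx Z}_k}(\matrx Z)$ for all $\matrx Z$.

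Next I would establish the one-step progress inequality (the analogue of Lemma~2.3 of \cite{Beck:2009:FISTA}): for every pair $(\matrx U,\matrx V)$,
\begin{equation*}
f(\matrx U,\matrx V)-f(\matrx U_k,\matrx V_k)\ge\frac{\tilde L}{2}\fnorm{(\matrx U_k,\matrx V_k)-(\tilde{\matrx U}_k,\tilde{\matrx V}_k)}^2+\tilde L\big\langle(\tilde{\matrx U}_k,\tilde{\matrx V}_k)-(\matrx U_k,\matrx V_k),\,(\matrx U,\matrx V)-(\matrx U_k,\matrx V_k)\big\rangle,
\end{equation*}
which follows from the optimality conditions of Eq.(\ref{equ:update_UV}) together with the majorization above. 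I would then instantiate this twice at iteration $k$, once with $(\matrx U,\matrx V)=(\matrx U_{k-1},\matrx V_{k-1})$ and once with $(\matrx U,\matrx V)=(\hat{\matrx U},\hat{\matrx V})$, combine them with weights $t_k-1$ and $1$, and substitute the extrapolation rule Eq.(\ref{equ:interm_UV}) and the identity $t_{k-1}^2=t_k^2-t_k$ implied by Eq.(\ref{equ:update_t}). Writing $v_k=f(\matrx U_k,\matrx V_k)-f(\hat{\matrx U},\hat{\matrx V})$, $\matrx S_k=t_k\matrx U_k-(t_k-1)\matrx U_{k-1}-\hat{\matrx U}$ and $\matrx T_k=t_k\matrx V_k-(t_k-1)\matrx V_{k-1}-\hat{\matrx V}$, completing the square (as in Lemma~4.1 of \cite{Beck:2009:FISTA}) gives
\begin{equation*}
\frac{2}{\tilde L}t_k^2v_k+\fnorm{\matrx S_k}^2+\fnorm{\matrx T_k}^2\le\frac{2}{\tilde L}t_{k-1}^2v_{k-1}+\fnorm{\matrx S_{k-1}}^2+\fnorm{\matrx T_{k-1}}^2.
\end{equation*}

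Finally I would telescope. The displayed quantity is nonincreasing in $k$, hence bounded by its value at $k=1$; using $t_1=1$ and $(\tilde{\matrx U}_1,\tilde{\matrx V}_1)=(\matrx U_0,\matrx V_0)$, the one-step inequality at $k=1$ with $(\matrx U,\matrx V)=(\hat{\matrx U},\hat{\matrx V})$ bounds that value by $\fnorm{\matrx U_0-\hat{\matrx U}}^2+\fnorm{\matrx V_0-\hat{\matrx V}}^2$ (the cross terms collapse to $\fnorm{\matrx Z_0-\hat{\matrx Z}}^2-2\fnorm{\matrx Z_1-\matrx Z_0}^2$). Combined with the induction bound $t_k\ge(k+1)/2$ (immediate from $t_1=1$ and Eq.(\ref{equ:update_t})), this gives $v_k\le\frac{\tilde L}{2t_k^2}\big(\fnorm{\matrx U_0-\hat{\matrx U}}^2+\fnorm{\matrx V_0-\hat{\matrx V}}^2\big)\le\frac{2\tilde L(\fnorm{\matrx U_0-\hat{\matrx U}}^2+\fnorm{\matrx V_0-\hat{\matrx V}}^2)}{(k+1)^2}$, which is the claim. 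The only point needing genuine care --- everything else being a verbatim port of \cite{Beck:2009:FISTA} to the two-block product space --- is justifying that $\nabla\ell$ is globally Lipschitz with modulus $\tilde L$: this needs the Hessian $\nabla\nabla\ell$ to be bounded in operator norm over the region the iterates visit (automatic for the Gaussian and Bernoulli deviances, and requiring a bounded domain or a bound on $\eta$ for Poisson), together with Eq.(\ref{eq:upperL}); if one instead picks $L_k$ by the backtracking rule mentioned near Eq.(\ref{eq:upperL}), $\tilde L$ is simply replaced throughout by the terminal $L_k$ and the argument is unchanged.
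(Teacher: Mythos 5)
Your proposal is correct and follows essentially the same route as the paper's proof: both port the Beck--Teboulle FISTA argument to the two-block product space, establish the base case $\frac{2}{\tilde L}t_1^2v_1+\fnorm{\matrx U_1-\hat{\matrx U}}^2+\fnorm{\matrx V_1-\hat{\matrx V}}^2\le\fnorm{\matrx U_0-\hat{\matrx U}}^2+\fnorm{\matrx V_0-\hat{\matrx V}}^2$ via the one-step (Lemma~2.3) inequality, propagate the monotone quantity $\frac{2}{\tilde L}t_k^2v_k+\fnorm{\matrx S_k}^2+\fnorm{\matrx T_k}^2$ (the paper cites Lemma~4.1 where you rederive it), and finish with $t_k\ge(k+1)/2$. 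The only difference is that you spell out the recursion and the Lipschitz caveat in more detail than the paper's sketch, which simply defers to the cited lemmas and to its Remark~1.
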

\begin{proof}
We start with defining the following quantities
\begin{align*}
v_k=&f(\matrx U_k,\matrx V_k)-f(\hat{\matrx U},\hat{\matrx V}),\\
a_k=&\frac{2}{L_k}t^2_kv_k,\\
b_k=&||t_k\matrx U_k-(t_k-1)\matrx U_{k-1}-\hat{\matrx U}||^2_F\\
    +&||t_k\matrx V_k-(t_k-1)\matrx V_{k-1}-\hat{\matrx V}||^2_F, \\
c  =&||\tilde{\matrx U}_1-\hat{\matrx U}||^2_F+||\tilde{\matrx V}_1-\hat{\matrx V}||^2_F\\
   =&||\matrx U_0-\hat{\matrx U}||^2_F+||\matrx V_0-\hat{\matrx V}||^2_F, 
\end{align*}
where $\tilde{\matrx U}_1=\matrx U_0$, $\tilde{\matrx V}_1=\matrx V_0$, and subsequent $\tilde{\matrx U}_k$ and $\tilde{\matrx V}_k$ are defined by Eq.(\ref{equ:interm_UV}).
Following the proof of Theorem 4.4 in \cite{Beck:2009:FISTA}, in the first iteration, given $t_1=1$, we have $a_1=\frac{2}{L_1}v_1$, and $b_1=||\matrx U_1-\hat{\matrx U}||^2_F-||{\matrx V}_1-\hat{\matrx V}||^2_F$. We show that $a_1 + b_1 \le c$ by applying Lemma 2.3 in \cite{Beck:2009:FISTA}, which yields
\begin{align*}
&f(\hat{\matrx U},\hat{\matrx V})-f(\matrx U_1,\matrx V_1)=-v_1\\
\ge& \frac{L_1}{2}||\matrx U_1-\tilde{\matrx U}_1||^2_F+L_1\langle\tilde{\matrx U}_1-\hat{\matrx U},\matrx U_1-\tilde{\matrx U}_1\rangle\\
&+\frac{L_1}{2}||\matrx V_1-\tilde{\matrx V}_1||^2_F+L_1\langle\tilde{\matrx V}_1-\hat{\matrx V},\matrx V_1-\tilde{\matrx V}_1\rangle\\
=&\frac{L_1}{2}(||\matrx U_1-\hat{\matrx U}||^2_F-||\tilde{\matrx U}_1-\hat{\matrx U}||^2_F)\\
&+\frac{L_1}{2}(||\matrx V_1-\hat{\matrx V}||^2_F-||\tilde{\matrx V}_1-\hat{\matrx V}||^2_F).
\end{align*}
Reorganizing the above inequality yields \begin{align*} \frac{2}{L_1}t_1^2 v_1+||\matrx U_1&-\hat{\matrx U}||^2_F+||\matrx V_1-\hat{\matrx V}||^2_F\le\\
& ||\tilde{\matrx U}_1-\hat{\matrx U}||^2_F+||\tilde{\matrx V}_1-\hat{\matrx V}||^2_F
\end{align*}
Thus, $a_1+b_1\le c$ holds. 
 
Then, according to Lemma 4.1 in \cite{Beck:2009:FISTA}, we have for every $k\ge 1$, $a_k-a_{k+1}\ge b_{k+1}-b_k$, together with $a_1+b_1\le c$, which derives into the following inequality,
$$c\ge a_1+b_1\ge a_2+b_2\ge \dots\ge a_k+b_k\ge a_k.$$
Therefore, we obtain that
\begin{equation}\label{eq:proof}
\frac{2}{L_k}t^2_kv_k\le ||\matrx U_0-\hat{\matrx U}||^2_F+||\matrx V_0-\hat{\matrx V}||^2_F,
\end{equation}
Given $t_k$ is updated according to Eq.(\ref{equ:update_t}), it is easy to show that $t_k\ge \dfrac{(k+1)}{2}$. Substituting this inequality into Eq.(\ref{eq:proof})  yields
\begin{equation*}
v_k\le \frac{2L_k\left(||\matrx U_0-\hat{\matrx U}||^2_F+||\matrx V_0-\hat{\matrx V}||^2_F\right)}{(k+1)^2}
\end{equation*}
By the Remark 3.2 in \cite{Beck:2009:FISTA} and the inequality (\ref{eq:upperL}), we also know that an upper bound of $L_k$ is $\tilde{L}$. Hence, 
\begin{equation*}
f(\matrx U_k,\matrx V_k)-f(\hat{\matrx U},\hat{\matrx V})\le \frac{2\tilde{L}\left(||\matrx U_0-\hat{\matrx U}||^2_F+||\matrx V_0-\hat{\matrx V}||^2_F\right)}{(k+1)^2}
\end{equation*}
In our algorithm, we set $L_k=\tilde{L}, \forall k$.
\end{proof}

\vspace{-2mm}
\begin{remark}
The loss function, $\ell(\matrx U, \matrx V)$, of an exponential distribution has Lipschitz continuous gradient within the range $\{||\matrx U||_{1,2}\le \delta_1, ||\matrx V^\top||_{1,2}\le \delta_2\}$ where $\delta_1, \delta_2$ are constant values in terms of $\lambda_1, \lambda_2$, respectively to guarantee the non-trivial step size $\frac{\lambda}{L}$. Otherwise, it may lead to a sub-optimal solution.
\end{remark}

\subsection{Asymptotic Analysis}\label{sec:Asy_Ana}
To facilitate the asymptotic analysis, we re-write the notation as follows: let 
\begin{equation*}
\vect \beta = [\mbox{vect}(\matrx U)^\top, \mbox{vect}(\matrx V)^\top]^\top, ~\mbox{ }~ \topidx{\matrx H}{i}= [\topidx{\vect h_{\tau+1}}{i}, \cdots, \topidx{\vect h_n}{i}]
\end{equation*}
and
\begin{equation*}
\topidx{\vect h_t}{i}=[\mbox{vect}(\matrx X_{i;t})^\top, \mbox{vect}(\matrx X_{i;t})^\top]^\top
\end{equation*}
where one block $\matrx X_{i;t}$ corresponds to $\matrx U$ and the other to $\matrx V$. Then, correspondingly, we have $\topidx{\eta}{i}_t=(\topidx{\vect h_t}{i})^\top\vect \beta$, and $f(\matrx U,\matrx V)$ can be re-written as $f(\vect \beta) = \ell(\vect \beta) + R(\vect \beta; \lambda_1,\lambda_2)$.

Solve Eq.(\ref{equ:dev_penality}) yields a solution to the penalized estimating equations:
\begin{equation}
\label{equ:pgee}
\sum_i (\topidx{\matrx D}{i})^\top (\topidx{\matrx \Sigma}{i})^{-1} \topidx{\vect s}{i} + \lambda \frac{\partial R(\vect \beta)}{\partial \vect \beta}=0
\end{equation}
assuming $\lambda_1 = \lambda_2=\lambda$ for notational convenience which will not change the property. Given our model definition (\ref{equ:AVR}), $\topidx{\matrx D}{i}=\topidx{\matrx A}{i}(\topidx{\matrx H}{i})^\top$. The first term in (\ref{equ:pgee}) is the estimating functions in GEE \cite{GEE:Liang:1986} whereas the second term corresponds to the regularizers. The asymptotic property of Eq.(\ref{equ:dev_penality}) can be naturally derived from the results in \cite{GEE:Liang:1986} which have proved that the estimating equations $L(\vect \beta) = \sum_i (\topidx{\matrx D}{i})^\top (\topidx{\matrx \Sigma}{i})^{-1} \topidx{\vect s}{i}$ of GEE gives a consistent estimator of $\vect \beta$. We extend the same argument to our formulation Eq.(\ref{equ:dev_penality}) in Theorem \ref{thm:asymptotics} under the following regularity conditions: 
$\topidx{\matrx H}{i}$ is bounded, and $\lim_{m \rightarrow \infty}(\sum_i \topidx{\matrx H}{i})/m=\topidx{\matrx H}{0}$, and $(\topidx{\matrx H}{i})^\top \topidx{\matrx H}{i}$ are not singular, and the following limit is also not singular
\begin{equation*}
\lim_{m \rightarrow \infty}(\sum_i (\topidx{\matrx H}{i})^\top \topidx{\matrx H}{i})/m;
\end{equation*}
Moreover, $L(\vect \beta)$ is twice continuously differentiable with respect to $\vect \beta$, and $\partial L/\partial \vect \beta$ is positive definite.

\begin{theorem}
\label{thm:asymptotics}
Assume that: (1) $\hat{\vect\alpha}$ is a consistent estimator given $\vect\beta$; (2) $\hat{\phi}$ is a consistent estimator given $\vect{\beta}$; and (3) the tuning parameter $\lambda_m=o(\sqrt{m})$.
Under the regularity conditions listed above, optimizing Eq.(\ref{equ:dev_penality}) yields an asymptotically consistent and normally distributed estimator $\hat{\vect \beta}$, that is: 
\begin{displaymath}\sqrt{m}(\hat{\vect \beta}-\vect \beta^*)\rightarrow_d N(0, \matrx \Sigma) \mbox{~~as~~} m\rightarrow\infty\end{displaymath} 
where $\vect \beta^*$ is the true model coefficients in a model of $E(y_t^{(i)}) = g^{-1}((\topidx{\vect h_t}{i})^\top\vect \beta)$ and $\Sigma$ is a positive definite variance-covariance matrix (see \cite{GEE:Liang:1986} for details of $\Sigma$). 
\end{theorem}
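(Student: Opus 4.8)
The plan is to reduce Theorem~\ref{thm:asymptotics} to the established asymptotic theory for unpenalized GEE in \cite{GEE:Liang:1986}, by showing that when $\lambda_m=o(\sqrt m)$ the regularization term contributes nothing at the $\sqrt m$ scale. Throughout, write the penalized estimating function as $\Psi_m(\vect\beta)=L(\vect\beta)+\lambda_m\,\vect\rho(\vect\beta)$, where $\vect\rho(\vect\beta)\in\partial R(\vect\beta)$ is a subgradient of the $\ell_{1,2}$ penalty; since the subdifferential of a group-$\ell_1$ norm is built from unit-norm blocks, every component of $\vect\rho$ is bounded by $1$, so $\|\lambda_m\vect\rho(\vect\beta)\|=O(\lambda_m)=o(\sqrt m)$ uniformly in $\vect\beta$. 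By Theorem~\ref{thm:convex} the objective of Eq.(\ref{equ:dev_penality}) is convex, hence its minimizer $\hat{\vect\beta}$ is exactly a root of the inclusion $0\in\Psi_m(\vect\beta)$; this is the object we analyze.

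First I would establish $\sqrt m$-consistency of $\hat{\vect\beta}$. On the sphere $\{\vect\beta:\|\vect\beta-\vect\beta^*\|=C/\sqrt m\}$ a first-order expansion gives $L(\vect\beta)/m = L(\vect\beta^*)/m - (\matrx J^*/m)(\vect\beta-\vect\beta^*) + o_p(1/\sqrt m)$, where $\matrx J^*=\partial L/\partial\vect\beta$ evaluated at $\vect\beta^*$; by the regularity conditions $\matrx J^*/m$ tends to a nonsingular, sign-definite limit, and $L(\vect\beta^*)/m=O_p(1/\sqrt m)$ by the CLT step below. Hence the quadratic-in-$C$ contribution of the leading term dominates both the $O_p(1/\sqrt m)$ stochastic term and the $o(\sqrt m)/m=o(1/\sqrt m)$ penalty term once $C$ is large, so $(\vect\beta-\vect\beta^*)^\top\Psi_m(\vect\beta)$ has a fixed sign on that sphere with probability tending to one; a Brouwer fixed-point/continuity argument (exactly as in the consistency proof of \cite{GEE:Liang:1986}) then places a root $\hat{\vect\beta}$ inside, i.e. $\hat{\vect\beta}-\vect\beta^*=O_p(1/\sqrt m)$.

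Given this localization I would Taylor-expand $\Psi_m$ about $\vect\beta^*$ along $\hat{\vect\beta}$, using that $L$ is twice continuously differentiable, to obtain $0 = L(\vect\beta^*) + \matrx J^*(\hat{\vect\beta}-\vect\beta^*) + o_p(\sqrt m\,\|\hat{\vect\beta}-\vect\beta^*\|) + \lambda_m\vect\rho(\hat{\vect\beta})$, whence
\[
\sqrt m(\hat{\vect\beta}-\vect\beta^*) = -(\matrx J^*/m)^{-1}\Big(\frac{1}{\sqrt m}L(\vect\beta^*)+\frac{\lambda_m}{\sqrt m}\vect\rho(\hat{\vect\beta})\Big)+o_p(1).
\]
Here $(\matrx J^*/m)^{-1}$ converges to a fixed nonsingular matrix by the assumed limits on $(\matrx H^{(i)})^\top\matrx H^{(i)}/m$; the term $\frac{\lambda_m}{\sqrt m}\vect\rho(\hat{\vect\beta})=o(1)$ because $\vect\rho$ is bounded and $\lambda_m=o(\sqrt m)$; and $\frac{1}{\sqrt m}L(\vect\beta^*)=\frac{1}{\sqrt m}\sum_i(\matrx D^{(i)})^\top(\matrx\Sigma^{(i)})^{-1}\vect s^{(i)}$ is a normalized sum of independent terms with mean zero, since $E[\vect s^{(i)}]=\vect y^{(i)}-\vect\mu^{(i)}(\vect\beta^*)=\vect 0$ at the truth. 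By the Lindeberg--Feller CLT — the Lindeberg condition holding because the summands have uniformly bounded moments (bounded $\matrx H^{(i)}$ together with the bounded-natural-parameter exponential-family moments) and $\frac1m\sum_i\mathrm{Var}(\cdot)$ converges to a nonsingular limit — this term is asymptotically normal, and Slutsky's theorem yields $\sqrt m(\hat{\vect\beta}-\vect\beta^*)\rightarrow_d N(0,\matrx\Sigma)$ with $\matrx\Sigma$ the sandwich matrix of \cite{GEE:Liang:1986}. Finally, assumptions (1) and (2) permit replacing $\vect\alpha$ and $\phi$ inside $\matrx\Sigma^{(i)}$ by the consistent estimates $\hat{\vect\alpha},\hat\phi$: these nuisance parameters enter $L$ smoothly and only through $\matrx\Sigma^{(i)}$, so the $\sqrt m$-scaled expansion is unchanged up to $o_p(1)$ — the standard robustness of GEE to the working correlation.

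The main obstacle I expect is the first step: producing a consistent root of a \emph{nonsmooth} penalized estimating equation. Because the $\ell_{1,2}$ penalty is not differentiable wherever a row of $\matrx U$ or a column of $\matrx V$ vanishes, the optimality condition is a subgradient inclusion rather than an equation, so one must verify both that this inclusion has a solution within $O_p(1/\sqrt m)$ of $\vect\beta^*$ and that the $o(\sqrt m)$-size subgradient correction cannot derail the fixed-point localization (here the convexity from Theorem~\ref{thm:convex} is what guarantees the minimizer really is such a root). Once that is secured, the Taylor expansion and the CLT are routine transcriptions of the Liang--Zeger analysis.
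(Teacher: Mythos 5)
Your proposal follows essentially the same route as the paper: both reduce the penalized estimating equation (\ref{equ:pgee}) to the unpenalized GEE of Liang and Zeger \cite{GEE:Liang:1986} by observing that the $\ell_{1,2}$ penalty has a bounded (sub)gradient, so the correction term $\lambda_m\,\partial R(\vect\beta)/\partial\vect\beta$ is asymptotically negligible under $\lambda_m=o(\sqrt{m})$. The paper's own proof is only the four-line version of this argument --- divide (\ref{equ:pgee}) by $m$, cite \cite{GEE:Liang:1986} for the first term, and note that the second term vanishes --- whereas you additionally supply the $\sqrt{m}$-localization of a root of the subgradient inclusion, the Taylor expansion, and the correct observation that the penalty must be controlled at the $\sqrt{m}$ scale (i.e.\ $\lambda_m/\sqrt{m}\to 0$) rather than merely at scale $m$ for the distributional (as opposed to consistency) conclusion; all of this is consistent with, and strictly strengthens, the paper's sketch.
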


\begin{proof}
Multiplying $1/m$ to both sides of Eq.(\ref{equ:pgee}) yields
\begin{equation}
\label{equ:thm1_proof1}
\frac{1}{m}\sum_i (\topidx{\matrx D}{i})^\top (\topidx{\matrx \Sigma}{i})^{-1} \topidx{\vect s}{i} + \frac{\lambda_m}{m} \frac{\partial R(\vect \beta)}{\partial \vect \beta}=0.
\end{equation} 
It is known that solving $\frac{1}{m}\sum_i (\topidx{\matrx D}{i})^\top (\topidx{\matrx \Sigma}{i})^{-1} \topidx{\vect s}{i}=0$ yields an estimate of $\hat{\vect \beta}$ that is asymptotically consistent with $\vect \beta^*$:
\begin{equation*}
\sqrt{m}(\hat{\vect \beta}-\vect \beta^*)\rightarrow_d N(0, \matrx \Sigma) \mbox{~~as~~} m\rightarrow\infty ~~\cite{GEE:Liang:1986}.
\end{equation*}
Since our regularizer $R$ (based on the $\ell_{1,2}$ matrix norm) is Lipschitz continuous, its partial derivative $\partial R(\vect \beta)/\partial \vect \beta$ is bounded. The second term of Eq.(\ref{equ:thm1_proof1}) vanishes when $m \rightarrow \infty$, and thus the conclusion holds. 
\end{proof}

\vspace{-2mm}
Recall how $\hat{\vect \alpha}$ and $\hat{\phi}$ are estimated in the proposed method. Those estimates from the Pearson residuals are consistent. Thus, the estimate $\hat{\vect \beta}$ in the proposed method is asymptotically consistent and normally distributed according to Theorem 3.

\subsection{Exemplar Exponential Families with Lipschitz Condition}\label{sec:Eg_family}
The purposed algorithm is suitable to optimize any loss function that has Lipschitz continuous gradient. In this section, we discuss that three exemplar exponential families: Gaussian, Bernoulli, and Poisson, satisfy the Lipschitz condition. We specify how to compute the gradient of the loss function for these distributions. The gradients will instantiate (and replace) Eq.(\ref{equ:gradient}) used in our algorithm. 
\subsubsection{Gaussian Distribution}
If the outcome follows a Gaussian distribution, then the outcome $y$ is linearly regressive in terms of the covariates in the observations. The mean and the conditional covariance of $y$ with a working correlation structure $\matrx R(\vect \alpha)$ are calculated as:
\begin{align*}
E(y^{(i)}_t)&=\mu^{(i)}_t=tr\left(\matrx X^\top_{(i;t)}\matrx W\right),\\
cov(\vect y^{(i)})&=\matrx\Sigma^{(i)}=\matrx R(\vect \alpha),
\end{align*}
so the gradient $\nabla_{\matrx U}\ell_k$ in Eq.(\ref{equ:gradient}) at the $k$-th iteration can be computed as
\begin{equation*}
\label{equ:lr_gradient}
\nabla_{\matrx U}\ell_k = \mbox{reshape}\left(\sum^m_{i=1}\left(\matrx D^{(i)}\right)^\top\left(\matrx R(\vect{\alpha})\right)^{-1}\vect s^{(i)}_k\right),
\end{equation*}
where $\matrx D^{(i)}=\frac{\partial \vect{\mu}^{(i)}}{\partial \mathrm{vect}\left(\tilde{\matrx U}_k\right)} 
=\left[\mathrm{vect}\left(\matrx X_{(i;1)}\right), \dots, \mathrm{vect}\left(\matrx X_{(i;n)}\right) \right]^\top$, and $\vect s^{(i)}_k=\vect y^{(i)}-\left(\matrx D^{(i)}\right)^\top\mathrm{vect}(\tilde{\matrx U}_k)$. The gradient $\nabla_{\matrx V}\ell_k$ can be similarly computed. Hence, the gradient is linear in terms of $\vect \beta$, and thus Lipschitz continuous. 
\subsubsection{Bernoulli Distribution}
If the generalized variables $\mu$ follow a Bernoulli distribution and the outcomes are binary variables. The relationship between the outcome and covariates can be learned by a logistic regression which is a special case of the GLM with the Bernoulli assumption. Hence, the mean and the conditional covariance of $y$ with the working correlation structure $\matrx R(\vect \alpha)$ are formulated as
\begin{align}
E(y^{(i)}_t)&=\mu^{(i)}_t=\frac{\exp(\eta^{(i)}_t)}{1+\exp(\eta^{(i)}_t)} \label{equ:log_mu}\\
cov(\vect y^{(i)})=\matrx \Sigma^{(i)}&=\frac{\left(\matrx A^{(i)}\right)^{1/2}\matrx R(\vect \alpha)\left(\matrx A^{(i)}\right)^{1/2}}{\phi}\nonumber
\end{align}
where 
$\matrx A^{(i)}=\mathrm{diag}\left(\langle\vect\mu^{(i)},1-\vect\mu^{(i)}\rangle\right)$ \\    $=\mathrm{diag}\left(\frac{\exp(\eta^{(i)}_t)}{\left(1+\exp(\eta^{(i)}_t)\right)^2}\right)$ and 
$\eta^{(i)}_t=tr(\matrx X^\top_{(i;t)}\matrx W)$.

The gradient $\nabla_{\matrx U} \ell_k$ in Eq.(\ref{equ:gradient}) can be written as:
\begin{equation*}
\mbox{reshape}\left(\left(\matrx D^{(i)}\right)^\top (\matrx A^{(i)})^{-1/2}\matrx R(\vect \alpha)^{-1}(\matrx A^{(i)})^{-1/2}\topidx{\vect s}{i}_k\right)\label{equ:gradient_U}
\end{equation*}
where $
\matrx D^{(i)}=\frac{\partial \vect{\mu}^{(i)}}{\partial \vect \eta^{(i)}}\times\frac{\partial \vect{\eta}^{(i)}}{\partial \mathrm{vect}\left(\tilde{\matrx U}_k\right)}$\\
$=\matrx A^{(i)}\left[\mathrm{vect}\left(\matrx X_{(i;1)}\right), \dots, \mathrm{vect}\left(\matrx X_{(i;n)}\right)\right]^\top$, and 
$\vect s^{(i)}_k=\vect y^{(i)}-\vect \mu^{(i)}(\tilde{\matrx U}_k)$.
The gradient $\nabla_{\matrx V}\ell_k$ can be similarly computed. 
\subsubsection{Poisson Distribution}
If the generalized variables $\mu$ follow a Poisson distribution and the outcomes contain count values. The relationship of the outcome and covariates is learned by a Poisson regression.
The mean and the conditional covariance of $y$ with the working correlation structure $\matrx R(\vect \alpha)$ are formulated as
\begin{align*}
E(y^{(i)}_t)&=\mu^{(i)}_t=\exp(\eta^{(i)}_t) \label{equ:Poisson_mu}\\
cov(\vect y^{(i)})=\matrx \Sigma^{(i)}&=\frac{\left(\matrx A^{(i)}\right)^{1/2}\matrx R(\vect \alpha)\left(\matrx A^{(i)}\right)^{1/2}}{\phi}\nonumber
\end{align*}
where $\matrx A^{(i)}=\mathrm{diag}\left((\vect\mu^{(i)})'\right)$
              $=\mathrm{diag}\left(\exp(\eta^{(i)}_t)\right)$.
The gradient $\nabla_{\matrx U} \ell_k$ can be computed using the general formula Eq.(\ref{equ:gradient}). 
The loss function of Poisson regression does not have globally Lipschitz continuous gradient. But the regularized loss function is equivalent to requiring the constraints, $||\matrx U||_{1,2}\le \delta_1$ and $||\matrx V^\top||_{1,2}\le \delta_2$ \cite{osborne2000lasso} for appropriate values of $\delta_1$ and $\delta_2$ that are determined according to $\lambda_1$ and $\lambda_2$. The loss function of Poisson regression does have Lipschitz continuous gradient within the confined region.

\section{Empirical Evaluation}
\label{sec:evaluation}
We validated the proposed approach by comparing it to several most relevant and recent methods. Three GLM-based  \cite{GLM2002olsson} methods: GEE \cite{GEE:Liang:1986}, GLMM \cite{Laird:1982:GLMM,McCulloch:2001:GLMM}, and  RE-EM tree\footnote{An R package is available in the Comprehensive R Archive Network (CRAN)} \cite{REEMtree} were compared. The recent graphical Granger modeling\footnote{downloaded from the author's website http://www-bcf.usc.edu/$\sim$liu32/code.html} \cite{Lozano2009} and a support vector machine based method called CSVM were also used. RE-EM tree and graphical Granger modeling could only be applied to regression problems (linearly regressive data from Gaussian distributions), and CSVM was only suitable to classification tasks (logistically regressive data from Bernoulli distributions). We named our approach by LGL (longitudinal group lasso). The normalized mean squared error (nMSE), which is the MSE divided by the variance of $y$ \cite{multi:Zhang:2010,MTL:2012:Gong}, was used to measure regression performance.  The area under the ROC curve (AUC) \cite{roc2006brown} was used to measure classification performance.

\subsection{Synthetic Data}
We generated a data matrix $\matrx X \in \mathrm{\mathbb{R}}^{d\times Tm}$ from the normal distribution $N(0,16)$, where $d=200$, $T=30$, and $m=400$. All training examples $\matrx X_{(i;t)}$($i=1,\cdots,m$, $\forall t=\tau+1,\cdots,T$) and $\tau=4$ were formed from the matrix $\matrx X$. Then, $\matrx U$ and $\matrx V$ were generated from the normal distribution $N(0,49)$. We set the rows corresponding to features from 1 to 150 in $\matrx U$ to zero and the columns 2 and 5 of $\matrx V$ to zero, and computed $\matrx W = \matrx U + \matrx V$. The  residuals $\vect s^{(i)}$ of every subject were generated from a multivariate normal distribution of different variances, $N(0,1^2),N(0,2^2),N(0,3^2)$. The covariance matrix of the residual followed different working correlation structures $\matrx R(\alpha)$ with the parameter $\alpha=0.64$. We generated 9 sets of regression residuals by choosing different combinations of the variances and the working correlation structures. Finally, the outcome variables $\vect y^{(i)}$ were computed as
\begin{equation*}
\vect y^{(i)}=\left[\mathrm{vect}\left(\matrx X_{(i;\tau+1)}\right),\dots,\mathrm{vect}\left(\matrx X_{(i;n)}\right)\right]^\top \mathrm{vect}(\matrx U+\matrx V)+\vect s^{(i)}.
\end{equation*}
The above procedure produced regression data. Using the same data $\matrx X$, the outcome $y^{(i)}_t$ of a classification problem was generated from the Bernoulli Distribution with $\mathrm{B}(1,\mu^{(i)}_t)$ where we used Eq.(\ref{equ:log_mu}) with the regression $\vect y^{(i)}$ to obtain $\vect \mu^{(i)}$. We hence obtained totally 18 synthesized data with 9 datasets for each distribution. We used the 25 early records of each subject to compose the training data and the rest 5 records to form test data. 

\begin{figure}[tbp]
\centering
	\includegraphics[width=.37\textwidth]{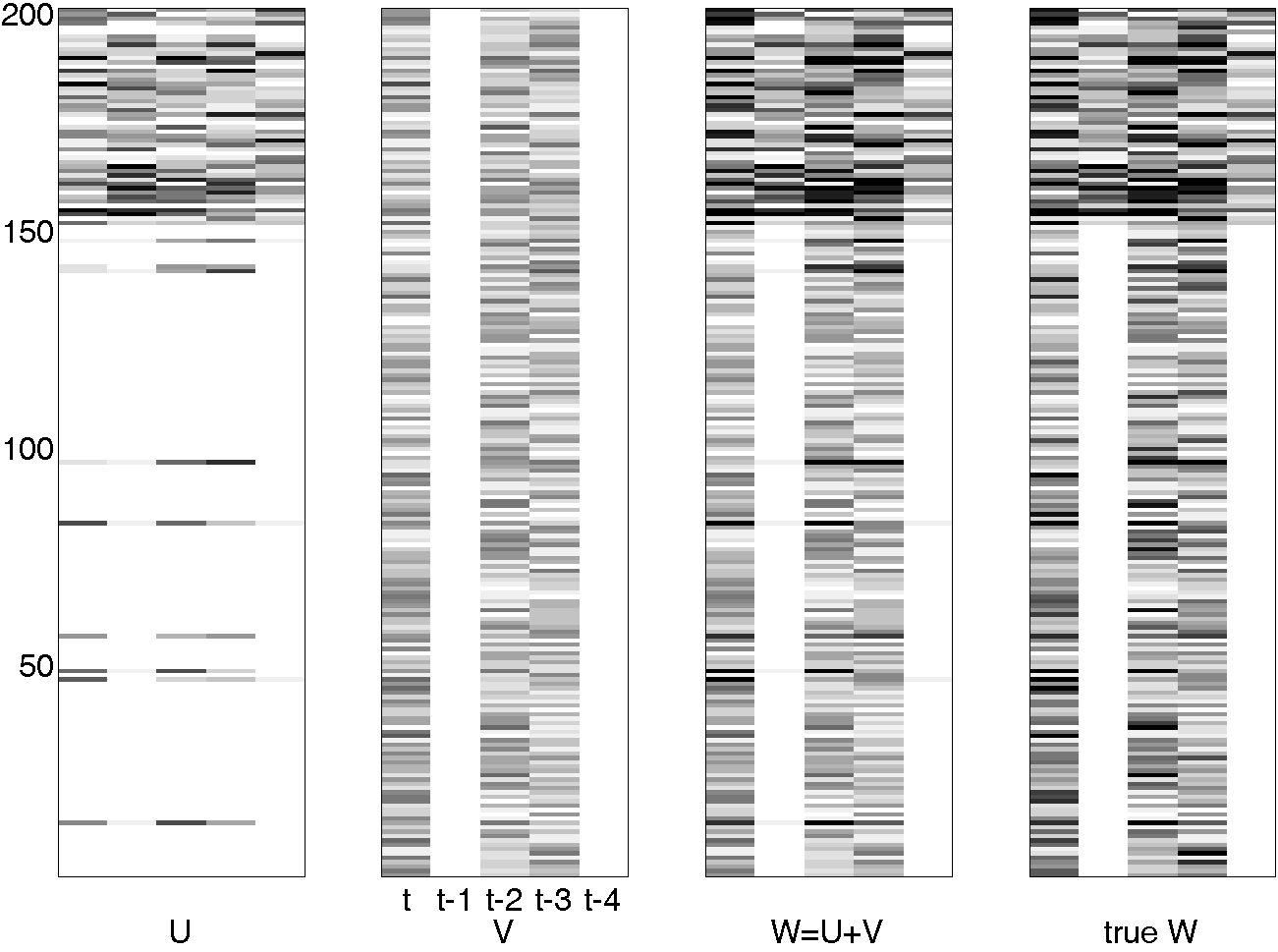} \vspace{-0.1in}
	\caption{The model constructed by our approach LGL on a synthetic dataset.}
	\label{fig:pcolorW} \vspace{-0.1in}
\end{figure}
\begin{figure}[tbp]
\centering
	\includegraphics[width=.35\textwidth]{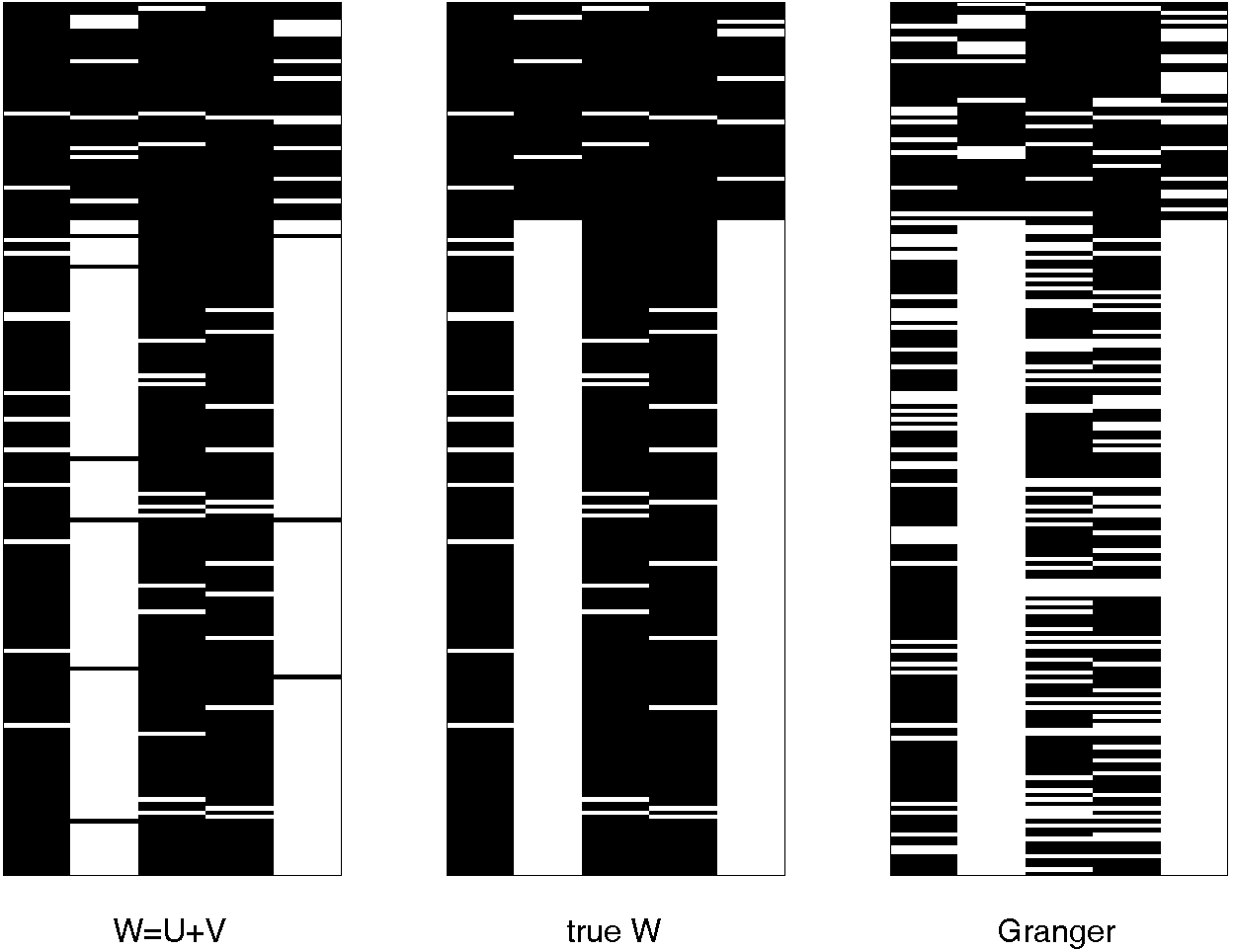} \vspace{-0.1in}
	\caption{Comparison between the constructed models by LGL and Granger.}
	\label{fig:pcolorW_granger} \vspace{-0.1in}
\end{figure}

Table \ref{tab:MSEsyn} shows the results where we can see that LGL outperformed all other methods on all the simulated datasets. The proposed method with correct correlation assumptions always performed the best. The graphical Granger modeling performed reasonably well but lacked of consideration of temporal correlation in the consecutive records. When the simulated noise increased, the performance of all methods had dropped as expected. We further demonstrate the selected features and temporal contingency. Figure \ref{fig:pcolorW} shows the constructed $\matrx U, \matrx V$, and $\matrx W$ by the LGL on the regression data with the AR(1) covariance structure and $N(0,3^2)$ residual where darker colors indicate larger values (and white means 0).  Most of the features from 150 to 200 were selected in $\matrx U$ and the correct columns (i.e., ${1,3,4}$) were selected in $\matrx V$. We compared our approach with the Granger model that also learned $\matrx W$ in Figure \ref{fig:pcolorW_granger}. Obviously, the Granger model excluded too many variables in the model. These results demonstrate the capability of LGL in terms of simultaneously capturing the important features and lagged effects.

\subsection{Real-world Data}
We tested our approach on two real-world datasets: the college alcohol use dataset; and the national longitudinal survey of youth (NLSY) dataset\footnote{http://www.bls.gov/nls/nlsy97.htm}. All comparison methods were used except GLMM due to its prohibitive computational costs. The college alcohol use dataset consisted of data from 504 college students on 52 variables in a period of continuous 30 days. The 52 variables measured each subject on daily stress, moods, emotion and substance use behavior. One of the variables measured the number of night-time drinks, which was our outcome variable, forming a regression problem. We also predicted the binge drinking behavior which is defined as having 5 or more night-time drinks, which formed a classification problem.
The NLSY dataset consisted of 11 yearly data for 3,376 subjects on 27 variables. The outcome variable measured the number of days that a subject had binge drinking in past 30 days, forming a regression problem. The other 26 variables measured features, such as smoking, drug use, family support and education.

For the college alcohol use data, we experimented with using the last $t=3,5,8,10 $ days of records as test data, and the rest for training. We found $\tau = 3$ was feasible. Larger $\tau$ would not change the results because the extra time points would be excluded by our model. However, it practically would cut down the sample size of each subject. The parameters $\lambda_1$ and $\lambda_2$ in our approach and any tuning parameters in other methods were tuned in a three-fold cross validation within the training data. Table \ref{tab:MSErec} shows the results where our approach LGL outperformed other methods in most settings. Among the four different correlation assumptions, LGL with AR(1) obtained the best performance on three of the four settings. The results also confirmed that modeling the correlation among repeated observations improved prediction performance \cite{GEE:Liang:1986}. We also observed that for instance, 16 out of 51 variables were selected when we used the last $5$ days to test binge drinking prediction. Features related to exited mood, under stress and interacting with friends during night time were the risk factors for binge drinking. The past 3 days were all included in the model, showing there was ``lagged" effects in alcohol use. The effect of past days was reduced with prolonged time lag.
\begin{table*}[t]
\begin{center}
\begin{scriptsize}
\caption{Comparison of different algorithms on synthetic data: (top) regression; (bottom) classification.}
\label{tab:MSEsyn}
	\begin{tabular}{@{}r@{ }|@{ }c@{ }|@{ }c@{ }|@{ }c@{}c@{ }c@{}c@{}|@{ }c@{}c@{ }c@{}c@{}|@{ }c@{ }@{ }c@{ }c@{}}
	\hline\hline
	\parbox[t]{2mm}{\multirow{11}{*}{\rotatebox[origin=c]{90}{Regression}}}&&&\multicolumn{4}{c|}{LGL}&\multicolumn{4}{c|}{GEE}&& \multicolumn{1}{|@{ }c@{ }}{} & \multicolumn{1}{|@{ }c@{}}{}\\
	& Structures& $e$ & AR(1) & exchangeable & Tri-diag & ind & AR(1) & exchangeable & Tri-diag & ind & GLMM & \multicolumn{1}{|@{ }c@{ }}{RE-EM tree} & \multicolumn{1}{|@{ }c@{}}{Granger} \\
	\cline{2-14}
		&&$N(0,1^2)$& \textbf{0.0018} & 0.0020 & 0.0019 & 0.0020 & 0.6613 & 0.6615 & 0.6614 & 0.6617 & 0.6657 & \multicolumn{1}{|@{ }c@{ }}{0.9873} & \multicolumn{1}{|@{ }c@{}}{0.0664} \\
	& AR(1)  &$N(0,2^2)$& \textbf{0.0025} & 0.0026 & 0.0028 & 0.0039 & 0.7223 & 0.7236 & 0.7224 & 0.7242 & 0.7323 & \multicolumn{1}{|@{ }c@{ }}{0.9998} & \multicolumn{1}{|@{ }c@{}}{0.0667} \\
		&&$N(0,3^2)$& \textbf{0.0032} & 0.0034 & 0.0036 & 0.0038 & 0.7191 & 0.7185 & 0.7182 & 0.7192 & 0.7179 & \multicolumn{1}{|@{ }c@{ }}{0.9924} & \multicolumn{1}{|@{ }c@{}}{0.0676} \\
	\cline{2-14}
			 &&$N(0,1^2)$& 0.0018 & 0.0016 & \textbf{0.0015} & 0.0022 & 0.6872 & 0.6875 & 0.6872 & 0.6873 & 0.6914 & \multicolumn{1}{|@{ }c@{ }}{0.9977} & \multicolumn{1}{|@{ }c@{}}{0.0656} \\
	& exchangeable&$N(0,2^2)$& 0.0024 & \textbf{0.0023} & 0.0024 & 0.0025 & 0.6927 & 0.6930 & 0.6927 & 0.6930 & 0.6931 & \multicolumn{1}{|@{ }c@{ }}{0.9982} & \multicolumn{1}{|@{ }c@{}}{0.0691} \\
			 &&$N(0,3^2)$& 0.0027 & \textbf{0.0026} & 0.0028 & 0.0032 & 0.7204 & 0.7204 & 0.7204 & 0.7205 & 0.7204 & \multicolumn{1}{|@{ }c@{ }}{0.9797} & \multicolumn{1}{|@{ }c@{}}{0.0635} \\
	\cline{2-14}
			&&$N(0,1^2)$ & 0.0021 & 0.0021 & \textbf{0.0021} & 0.0022 & 0.7514 & 0.7514 & 0.7514 & 0.7514 & 0.7515 & \multicolumn{1}{|@{ }c@{ }}{0.9925} & \multicolumn{1}{|@{ }c@{}}{0.0665} \\
	& Tri-diag&$N(0,2^2)$ & 0.0018 & 0.0023 & \textbf{0.0013} & 0.0026 & 0.6790 & 0.6792 & 0.6791 & 0.6793 & 0.6840 & \multicolumn{1}{|@{ }c@{ }}{0.9991} & \multicolumn{1}{|@{ }c@{}}{0.0680} \\
			&&$N(0,3^2)$ & 0.0033 & 0.0035 & \textbf{0.0031} & 0.0041 & 0.7226 & 0.7235 & 0.7226 & 0.7226 & 0.7222 & \multicolumn{1}{|@{ }c@{ }}{0.9998} & \multicolumn{1}{|@{ }c@{}}{0.0660} \\
	\hline\hline
	\parbox[t]{2mm}{\multirow{11}{*}{\rotatebox[origin=c]{90}{Classification}}}&&&\multicolumn{4}{c|}{LGL}&\multicolumn{4}{c|}{GEE} & & &\\
	& Structures& $e$ & AR(1) & exchangeable & Tri-diag & ind & AR(1) & exchangeable & Tri-diag & ind & CSVM & & \\
	\cline{2-12}
			&&$N(0,1^2)$ & \textbf{96.490\%} & 96.485\% & 96.485\% & 96.417\% & 77.691\% & 77.700\% & 77.699\% & 77.715\% & 76.644\% & & \\
	& AR(1)   &$N(0,2^2)$ & \textbf{96.442\%} & 96.431\% & 96.432\% & 96.653\% & 74.682\% & 74.727\% & 74.682\% & 74.731\% & 75.249\% & & \\
			&&$N(0,3^2)$ & \textbf{95.921\%} & 95.917\% & 95.917\% & 95.805\% & 77.704\% & 77.746\% & 77.708\% & 77.754\% & 77.547\% & & \\
	\cline{2-12}
			 &&$N(0,1^2)$ & 95.913\% & \textbf{95.937\%} & 95.912\% & 95.883\% & 76.115\% & 75.812\% & 76.114\% & 75.923\% & 75.232\% & & \\
	& exchangeable&$N(0,2^2)$ & 95.139\% & \textbf{95.161\%} & 95.147\% & 95.150\% & 70.290\% & 70.231\% & 70.275\% & 70.206\% & 71.687\% & & \\
			 &&$N(0,3^2)$ & 94.127\% & 94.091\% & \textbf{94.135\%} & 93.470\% & 73.839\% & 73.782\% & 73.831\% & 73.776\% & 73.894\% & & \\
	\cline{2-12}
			&&$N(0,1^2)$ & 95.976\% & 95.941\% & \textbf{95.978\%} & 95.889\% & 77.628\% & 77.634\% & 77.625\% & 77.617\% & 76.778\% & & \\
	& Tri-diag&$N(0,2^2)$ & 95.231\% & 95.231\% & \textbf{95.245\%} & 94.395\% & 72.132\% & 72.060\% & 72.126\% & 72.054\% & 71.615\% & & \\
			&&$N(0,3^2)$ & 95.092\% & 95.087\% & \textbf{95.094\%} & 94.231\% & 77.755\% & 77.533\% & 77.748\% & 77.637\% & 77.572\% & & \\
	\cline{1-12}\morecmidrules\cmidrule{1-12}
	\end{tabular}
\end{scriptsize}
\end{center}
\end{table*}
\begin{table*}[htbp] \vspace{-0.2in}
\begin{center}
\begin{scriptsize}
\caption{Comparison of different algorithms on the college alcohol use dataset: (top) predicting the number of night-time drinks (regression); (bottom) predicting the occurrence of binge drinking (classification).}
\label{tab:MSErec}
	\begin{tabular}{@{}r@{ }|@{ }c|c@{ }c@{ }c@{ }c@{ }|c@{ }c@{ }c@{ }c@{ }|cc@{ }}
 \hline \hline
  \parbox[t]{2mm}{\multirow{6}{*}{\rotatebox[origin=c]{90}{Regression}}}&& \multicolumn{4}{c|}{LGL} & \multicolumn{4}{c|}{GEE} & & \multicolumn{1}{|c@{ }}{} \\
 & \# observations & AR(1) & exchangeable & tri-diag & ind &AR & exchangeable & tri-diag & ind & RE-EM tree & \multicolumn{1}{|c@{ }}{Granger} \\
 \cline{2-12}
 & 3 & \textbf{0.933513} & 0.933863 & 0.935120 & 0.961841 & 1.064792 & 1.073358 & 1.063948 & 1.065760 & 1.115627 & \multicolumn{1}{|c@{ }}{1.369948} \\
 & 5 & 0.951999 & 0.954740 & \textbf{0.951953} & 0.976299 & 1.051219 & 1.067303 & 1.049305 & 1.072745 & 1.005753 & \multicolumn{1}{|c@{ }}{1.420547} \\
 & 8 & \textbf{0.759935} & 0.760450 & 0.760136 & 0.762205 & 0.787731 & 0.793329 & 0.787497 & 0.794089 & 0.759968 & \multicolumn{1}{|c@{ }}{0.909706} \\
 & 10 & \textbf{0.769303} & 0.769492 & 0.769428 & 0.774937 & 0.812622 & 0.818834 & 0.812011 & 0.806301 & 0.774797 & \multicolumn{1}{|c@{ }}{0.940940}\\
 \hline \hline
  \parbox[t]{2mm}{\multirow{6}{*}{\rotatebox[origin=c]{90}{Classification}}}&& \multicolumn{4}{c|}{LGL} & \multicolumn{4}{c|}{GEE} & & \\
 & \# observations & AR(1) & exchangeable & tri-diag & ind &AR & exchangeable & tri-diag & ind & CSVM & \\
 \cline{2-11}
 & 3 & 79.737\% & 75.677\% & 79.772\% & 78.579\% & 78.401\% & 74.145\% & 78.650\% & 77.831\% & \textbf{80.698\%}& \\
 & 5 & \textbf{83.290\%} & 77.237\% & 83.070\% & 82.323\% & 80.371\% & 78.363\% & 80.646\% & 80.438\% & 83.187\% & \\
 & 8 & \textbf{88.570\%} & 87.331\% & 87.936\% & 87.787\% & 85.999\% & 86.330\% & 85.714\% & 86.014\% & 88.017\% & \\
 & 10 & \textbf{89.484\%} & 87.574\% & 88.853\% & 88.578\% & 85.979\% & 86.622\% & 85.721\% & 85.783\% & 89.041\% & \\
 \cline{1-11}\morecmidrules\cmidrule{1-11}
\end{tabular}
\end{scriptsize}
\end{center}
\end{table*}

\begin{figure}[t]
\centering
	\includegraphics[width=.43\textwidth]{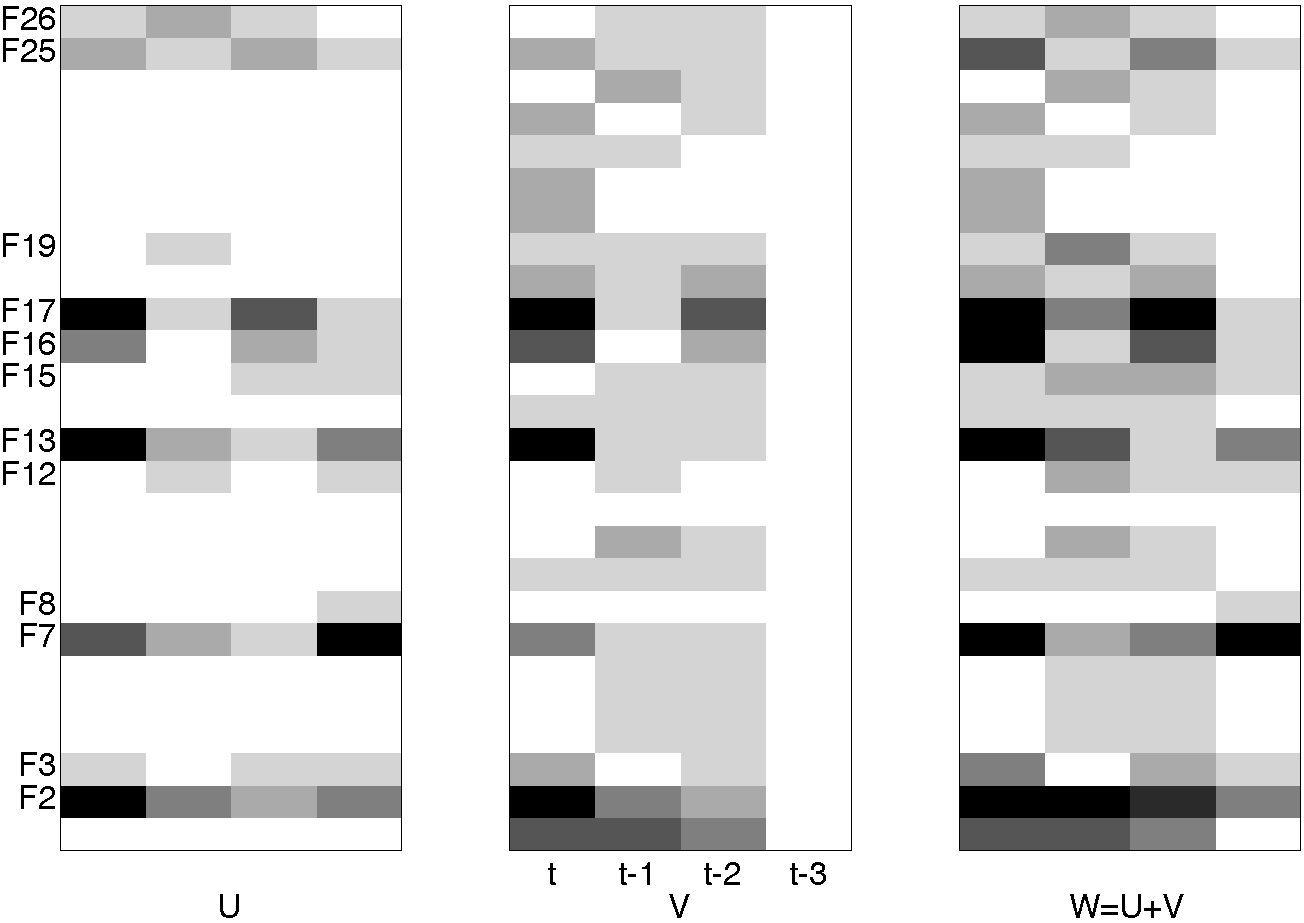} \vspace{-0.1in}
	\caption{The model constructed by our approach on the NLSY dataset.} \vspace{-0.1in}
	\label{fig:featureNLSY}
\end{figure}
For the NLSY dataset, we experimented respectively with using the last one, two and three years from each subject for test and the rest in training. We also considered $\tau = 3$, which means we used 3 year lagged data to predict the current year's behavior. All tuning parameters were tuned using a within-training two-fold cross validation.  The  results are reported in Table \ref{tab:MSEnlsy}. For any assumption of the working correlation structure, LGL had comparative performance with RE-EM tree and consistently outperformed GEE in all of the three experiments. LGL with tri-diagonal correlation performed the best on this dataset. The results here again show that taking care of the correlation among repeated observations improves the performance (given we see that LGL with the independent correlation assumption had the worst performance among all LGL variants).  

The gray map of $\matrx U$, $\matrx V$ and $\matrx W$ constructed by LGL is shown in Figure \ref{fig:featureNLSY} to illustrate an example for the tri-diagonal working correlation assumption. Out of the 26 features, 12 were selected by LGL and we list them below. \\
\setlength{\itemsep}{0cm}%
\setlength{\parskip}{0cm}%
\setlength{\parsep}{0cm}
\textbf{F2:} \# days of smoking a cigarette in the past 30 days \\
\textbf{F3:} Received a training certificate or vocational license \\
\textbf{F7:} The grade began during the academic year \\
\textbf{F8:} \# months that respondent did not attend school during  the academic year \\
\textbf{F12:} The college degree working toward or attained\\
\textbf{F13:} The highest grade completed as of the survey year \\
\textbf{F15:} The highest grade attended as of the survey day \\
\textbf{F16:} The highest grade completed as of the survey day \\
\textbf{F17:} \# days of using marijuana in the past 30 days \\
\textbf{F19:} \# times of using some drug or other substance right  before school or during school or work hours \\
\textbf{F25:} As the victim of a violent crime in the survey year \\
\textbf{F26:} Divorced parents.

\begin{table*}[htbp] 
\begin{center}
\begin{scriptsize}
\caption{Comparison of different algorithms on the NLSY dataset in terms of test nMSE values.}
\label{tab:MSEnlsy}
	\begin{tabular}{@{ }c|c@{ }c@{ }c@{ }c@{ }|c@{ }c@{ }c@{ }c@{ }|c|c@{ }}
 \hline \hline
  & \multicolumn{4}{c|}{LGL} & \multicolumn{4}{c|}{GEE} & & \\
 \# observations & AR(1) & exchangeable & tri-diag & ind &AR & exchangeable & tri-diag & ind & RE-EM tree & Granger \\
 \hline
 1 & 0.906552 & 0.908932 & 0.904760 & 0.909446 & 0.911543 & 0.918691 & 0.911885 & 0.914043 & \textbf{0.904260} & 1.370135\\
 2 & 0.888608 & 0.891761 & \textbf{0.887294} & 0.891051 & 0.898132 & 0.904225 & 0.897920& 0.898320 & 0.888822 & 1.363714\\
 3 & 0.885448 & 0.885814 & \textbf{0.883617} & 0.887579 & 0.892963 & 0.895863 & 0.892633 & 0.890937 & 0.883958 & 1.360430\\
 \hline \hline
\end{tabular}
\end{scriptsize}
\end{center} \vspace{-0.3in}
\end{table*}

\noindent This list shows that a subject's smoking, drug use, education background and family support influenced his or her drinking behavior. Figure \ref{fig:featureNLSY} demonstrates that the data in the third prior year might be obsolete to predict this year's behavior as LGL only selected the past two years for use in the model as seen in the plot of $\matrx V$.

\section{Discussion}
\label{sec:discussion}
We have proposed a new learning formulation for longitudinal analytics. Unlike existing methods, the proposed approach can simultaneously determine the temporal contingency and the influential features in predicting an outcome over time. The model parameter matrix is computed by the summation of two component matrices: one matrix reflects the selection among covariates; and the other characterizes the dependency along the temporal line. Moreover, our approach simultaneously models the sample correlations in the longitudinal data while constructing a predictive model. The related optimization problem can be efficiently solved by a new accelerated gradient descent algorithm. Convergence analysis shows that the algorithm can find the global optimal solution for the model with a quadratic convergence rate. An asymptotic analysis shows that the solution of our formulation is a consistent estimate of the model parameters. Hence, the proposed approach solves an underdeveloped problem - jointly learning the relevant features and determining how current outcome relies on past observations.  Empirical studies on both synthetic and real-world problems demonstrate the superior performance of the proposed approach over the state of the art.

\section*{Acknowledgments}
This work was supported by NSF grants IIS-1320586, DBI-1356655 and NIH grant R01DA037349. Jinbo Bi was also
supported by NSF grants IIS-1407205 and IIS-1447711.

\bibliographystyle{abbrv}
\bibliography{biblio_Tingyang,biblio_medicine_javon,biblio_jinbo}
\end{document}